\newcommand{\tm}{\textsc{Throughput Maximization}~}
\newcommand{\cI}{\ensuremath{\mathcal{I}}}
\newcommand{\cJ}{\ensuremath{\mathcal{J}}}
\newcommand{\cS}{\ensuremath{\mathcal{S}}}
\newcommand{\cT}{\ensuremath{\mathcal{T}}}
\newcommand{\cW}{\ensuremath{\mathcal{W}}}
\newtheorem{theorem}{Theorem}
\newtheorem{lemma}{Lemma}
\newtheorem{corollary}{Corollary}
\newtheorem{definition}{Definition}
\newtheorem{observation}{Observation}
\newcommand{\eps}{{\varepsilon}}
\newcommand{\OPT}{\mbox{\sc OPT}}
\newcommand{\opt}{\rm{opt}}
\newcommand{\Op}{\ensuremath{\mathcal{O}}}
\title{Approximations for Throughput Maximization}
\author{Dylan Hyatt-Denesik \thanks{Department of Combinatorcs and Optimization, University of Waterloo.
Most of this work was done when the author was a graduate student in Department of 
Computing Science at U. of Alberta}
\quad
Mirmahdi Rahgoshay \thanks{Department of Computing Science, University of Alberta}
\quad Mohammad R. Salavatipour\thanks{Department of Computing Science, University of Alberta. Supported by NSERC.}
}
\date{}
\begin{document}

\maketitle

\begin{abstract}
In this paper we study the classical problem of throughput maximization. 
In this problem we have a collection $J$ of $n$ jobs, each having a release time $r_j$, deadline $d_j$,
and processing time $p_j$. They have to be scheduled non-preemptively on $m$ identical parallel machines.
The goal is to find a schedule which maximizes the number of jobs scheduled entirely in their $[r_j,d_j]$ window.
This problem has been studied extensively (even for the case of $m=1$). Several special cases of the problem
remain open. Bar-Noy et al. [STOC1999] presented an algorithm with ratio $1-1/(1+1/m)^m$ for $m$ machines, which
approaches $1-1/e$ as $m$ increases. For $m=1$, Chuzhoy-Ostrovsky-Rabani [FOCS2001] presented an algorithm with
approximation with ratio $1-\frac{1}{e}-\varepsilon$ (for any $\varepsilon>0$). Recently Im-Li-Moseley [IPCO2017] presented
an algorithm with ratio $1-1/e-\varepsilon_0$ for some absolute constant $\varepsilon_0>0$ for any fixed $m$. They also presented
an algorithm with ratio $1-O(\sqrt{\log m/m})-\varepsilon$ for general $m$ which approaches 1 as $m$ grows.
The approximability of the problem for $m=O(1)$ remains a major open question.
Even for the case of $m=1$ and $c=O(1)$ distinct processing times the problem is open (Sgall [ESA2012]).
In this paper we study the case of $m=O(1)$ and show that if there are $c$ distinct processing times, 
i.e. $p_j$'s come from a set of size $c$, then
there is a $(1-\eps)$-approximation that runs in time $O(n^{mc^7\eps^{-6}}\log T)$, where $T$ is the largest
deadline. Therefore, for constant $m$ and constant $c$ this yields a PTAS. 
Our algorithm is based on proving structural properties for a near optimum solution that allows
one to use a dynamic programming with pruning.
\end{abstract}

\newpage

\section{Introduction}
Scheduling problems have been studied in various fields, including Operations Research 
and Computer Science over the past several decades. However, there are still several fundamental problems that are not resolved.
In particular, for problems of scheduling of jobs with release times and deadlines in order to optimize some objective functions 
there are several problems left open (e.g. see \cite{SW99,Potts,S12}).
In this paper we consider the classical problem of throughput maximization. In this problem, we are given
 a set $J$ of $n$ jobs where each job $j \in J$ has a processing time $p_j$, a 
release time $r_j$, as well as a deadline $d_j$. The jobs are to be scheduled non-preemptively on a single 
(or more generally on $m$ identical) machine(s), which can process only one job at a time. 
The value of a schedule, also called its throughput,
is the number of jobs that are scheduled entirely within their release time and deadline interval.
Our goal is to find a schedule with maximum throughput. 

Throughput maximization is a central problem in scheduling that has been studied extensively in various settings (even
special cases of it are interesting open problems). They have numerous applications in 
practice \cite{FMT89,ARSU02,LZ98,HR98,YL97}.
The problem is known to be NP-hard (one of the list of problems in the classic book by Garey and Johnson \cite{GJ79}).
In fact, even special cases of throughput maximization have
attracted considerable attention. For the case of all $p_j$'s being equal in the weighted setting
(where each job has a weight and we want to maximize the total weight of scheduled jobs), 
the problem can be solved in polynomial time only when $m=O(1)$ (running time is exponential in $m$) \cite{BBKT,RS09}. 
The complexity of the problem is open for general $m$. For the case where all processing times are bounded by a constant
the complexity of the problem is listed as an open question \cite{S12}. It was shown in \cite{EW14} that even for $m=1$ and
$p_j\in\{p,q\}$ where $p$ and $q$ are strictly greater than $1$ the problem is NP-Complete.

\subsection{Related Works}
It appears the first approximation algorithms for this problem where given by Spieksma \cite{S98} 
where a simple greedy algorithm has shown to have approximation ratio $1/2$.
This algorithm will simply run the job with the least processing time between all
 the available jobs whenever a machine completes a job.
He also showed that the integrality gap of a natural Linear Program relaxation is $2$. Later on, Bar-Noy et al. \cite{BGNS01}
analyzed greedy algorithms for various settings and showed that for the case of $m$ identical machines
greedy algorithm has ratio $1-1/(1+1/m)^m$. This ratio is $1/2$ for $m=1$ and approaches $1-1/e$ as $m$ grows.

In a subsequent work, Chuzhoy et al. \cite{COR06} looked at a slightly different version, call it {\em discrete} version,
where for each job $j$, we are explicitly given a collection $\cI_j$ of intervals (possibly of different lengths) in which
job $j$ can be scheduled. A schedule is feasible if for each job $j$ in the schedule, $j$ is placed within one of the intervals
of $\cI_j$. This version (vs. the version defined earlier, which we call the ``continuous'' version) have similarities but none
implies the other. In particular, the discrete version can model the continuous version if one defines each interval of size
$p_j$ of $[r_j,d_j]$ as an interval in $\cI_j$. However, the number of intervals in $\cI_j$ defined this way can be as big as
$d_j-r_j+p_j$ which is not necessarily polynomial in input size. Chuzhoy et al. \cite{COR06} 
presented a $(1-1/e-\epsilon)$-approximation for the discrete version of the problem.
Spieksma \cite{S98} showed that the discrete version of the problem is $MAX$-$SNP$ hard using a 
reduction to a version of $MAX$-$3SAT$.
No such approximation hardness result has been proved for the continuous version.

Berman and DasGupta \cite{BD00} provided a better than $2$ approximation for the case when all the jobs are relatively 
big compared to their window size.
A pseudo-polynomial time exact algorithm for this case is presented by 
Chuzhoy et al. \cite{COR06} with running time $O(n^{poly(k)} T^4)$, where
 $k = max_j (d_j - r_j) / p_j$ and $T= max_j d_j$.
% Our Theorem \ref{thm:relatively_big_processing_times} provides a $PTAS$ for this version of the problem
%with running time $O(\epsilon^{-1}n^{\epsilon^{-2}c}\log T)$ removing the assumption of $T$ being polynomial, but needs $c$
% to be constant.

For the weighted version of the problem, 
\cite{B03} showed that when we have uniform processing time $p_j=p$, the problem is solvable in polynomial time for $m=1$. 
For $m=O(1)$ and with uniform processing time \cite{BBKT,RS09} presented polynomial time algorithms.
For general processing time $2$-approximation algorithms are provided in \cite{BD00,BBFNS01} and this ratio has been
the best known bound for the weighted version of the problem.
More recently, Im et al. \cite{ILM17} presented better approximations for throughput maximization for all values of $m$.
For the unweighted case, for some absolute $\alpha_0>1-1/e$, for any $m=O(1)$ and for any $\epsilon>0$ 
they presented an $(\alpha_0-\epsilon)$-approximation
in time $n^{O(m/\epsilon^5)}$. They also showed another algorithm with ratio $1-O(\sqrt{(\log m)/m}-\epsilon)$ (for any $\epsilon>0$)
on $m$ machines. This ratio approaches $1$ as $m$ grows. Furthermore, their $1-O(\sqrt{(\log m)/m}-\epsilon)$ ratio
extends to the weighted case if $T={\rm Poly}(n)$.

Bansal et al. \cite{BCKPSS07} looked at various scheduling problems and presented approximation algorithms with resource augmentation
(a survey of the many resource augmentation results in scheduling is presented in \cite{PST04}).
An $\alpha$-approximation with $\beta$-speed augmentation means a schedule in which the machines are $\beta$-times faster and
the total profit is $\alpha$ times the profit of an optimum solution on original speed machines.
In particular, for throughput maximization they presented a $24$-speed 1-approximation, i.e. a schedule with optimum throughput
however the schedule needs to be run on machines that are 24-times faster in order to meet the deadlines.
This was later improved by Im et al. \cite{ILMT15}, where they developed a dynamic programming framework for non-preemtive scheduling
problems. In particular for throughput maximization (in weighted setting) they present a quasi-polynomial time
$(1-\epsilon,1+\epsilon)$-bicriteria approximation (i.e. an algorithm that finds a $(1-\epsilon)$-approximate solution
using $(1+\epsilon)$ speed up in quasi-polytime). We should point out that the PTAS we present for $c$ distinct processing time
implies (as an easy corollary) a bicriteria QPTAS as well, i.e. a $(1-\epsilon)$-approximation using $(1+\epsilon)$-speed up.

For the problem of machine minimization, where we have to find the minimum number of machines with
 which we can schedule all the jobs, the algorithm provided in \cite{RT87} has approximation ratio
 $O(\sqrt{\log n/\log\log n})$ only when $OPT = \Omega(\sqrt{\log n/\log\log n})$,
and ratio $O(1)$ when $OPT = \Omega(\log n)$. 
Later Chuzhoy et al. \cite{CGKN04} presented an $O(OPT)$-approximation which is good for the instances with relatively small $OPT$.
Combining this with the earlier works implies an $O(\sqrt{\log n/\log\log n})$-approximation.
Chuzhoy and Naor \cite{CN06} showed a hardness of $\Omega(\log\log n)$ for the machine minimization problem.

Another interesting generalization of the problem is when we assign a height to each job as well and allow them to share the machine 
as long as the total height of all the jobs running on a machine at the same time is no more than $1$.
The first approximation algorithm for this generalization is provided by \cite{BBFNS01} which has ratio $5$. 
Chuzhoy et al. \cite{COR06}
improved it by providing an $(e-1)/(2e-1) > 0.3873$-approximation algorithm which is only working for the unweighted and 
discrete version of the problem.
The problem has also been considered in the online setting \cite{BKMMRRSW92, FN95, KS92, LT94}.

%%%%%%%%%%%%%%%%%%%%%%%%%%%%%%%%%%%%%%%%%%%%%%%%%%%%%%%%%%%%%%%%%%%%%%%%%%%%%%%%%
\subsection{Our Results}
Our main result is the following. 
Suppose that there are $c$ distinct processing times (i.e. each $p_j$ comes from a set of size $c$).
%Let $\Gamma$ be the running time of a PTAS for the multiple knapsack problem (\cite{CK05,Jansen12}). It is known that
%$\Gamma\leq2^{\eps^{-1}\log^{-4}(1/\eps)}+{\rm Poly}(n)$ \cite{Jansen12}).
% and all values of release times and deadlines are bounded b$T$.
% For the case of $c=O(1)$ we present a polynomial time approximation scheme.

\begin{theorem}\label{theo:main}
For the throughput maximization problem with $m$ identical machines and 
$c$ distinct processing times for jobs, for any $\varepsilon>0$,
there is a $(1-\varepsilon)$-approximation algorithm that runs in time $n^{O(mc^7\varepsilon^{-6})}\log T$,
where $T$ is the largest deadline.
\end{theorem}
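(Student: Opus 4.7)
My plan is to combine a structural theorem with a divide-and-conquer dynamic program that enumerates over a restricted class of ``structured'' schedules.

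\textbf{Preprocessing.} First I would restrict attention to start times drawn from a polynomial-size set. Since processing times come from a set of size $c$, any reasonable start time can be written as $r_j + \sum_{i=1}^{c} k_i p_i$ for some job $j$ and small nonnegative integers $k_i$; a standard alignment argument shows that restricting to such times loses at most an $\varepsilon$-fraction of the throughput. I expect the $\log T$ factor in the running time to arise from divide-and-conquer on the time axis, giving a recursion tree of depth $O(\log T)$ in which each node is solved using the same DP.

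\textbf{Structural theorem.} The central step is to show that there exists a schedule with throughput at least $(1-\varepsilon)\OPT$ whose restriction to each machine has a short ``signature'': an ordered sequence of $O(c/\varepsilon)$ blocks, each containing consecutively scheduled jobs of a single processing time (or a bounded canonical interleaving). I would prove this starting from an optimum schedule, consolidating jobs of equal processing time on each machine via local exchange arguments and charging any dropped jobs to an $\varepsilon$-fraction of $\OPT$. The delicate point is that release times and deadlines forbid arbitrary reordering, so the exchanges must be performed inside windows that respect feasibility and the boundary effects (jobs released or expiring in the middle of a block) must be absorbed into the charging scheme.

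\textbf{Dynamic programming.} Equipped with the structural theorem, the DP recurses over time intervals $[a,b]$, splitting each such interval at its midpoint. The DP state records, for each of the $m$ machines, the signature of the schedule on the two halves together with interface information at the midpoint: whether some job straddles the midpoint, of which processing time, and on which side its processing is accounted for. The number of signatures per machine is bounded by $c^{O(c/\varepsilon)}$ times a polynomial count of possible job multiplicities per block, which after the $m$-fold product and additional bookkeeping gives $n^{O(mc^7/\varepsilon^6)}$ states per recursion node; combined with recursion depth $O(\log T)$ this matches the claimed runtime. The main obstacle is the structural theorem: achieving a signature size as small as $O(c/\varepsilon)$ while losing only $\varepsilon\cdot\OPT$ through the exchanges requires a careful combinatorial accounting of feasibility losses, and it is this accounting that drives the specific exponents $c^7$ and $\varepsilon^{-6}$.
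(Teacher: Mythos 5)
Your proposal correctly identifies several ingredients the paper uses — restricting start times to a polynomial-size slack set $r_j + \sum_j p_j$ (though you would need to allow the multipliers $k_i$ to range up to $n$, not just be ``small''), divide-and-conquer on the time axis contributing the $\log T$ factor, and using the earliest-deadline-first/exchange idea to reduce state. But the central step of your plan, the structural theorem, is not correct as stated, and that gap is not incidental: it is where the real work lies.

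The claim that an $(1-\varepsilon)$-approximate schedule exists whose restriction to each machine decomposes into $O(c/\varepsilon)$ blocks of a single processing time (even allowing ``bounded canonical interleavings'') fails already on simple instances. Take $n$ jobs with $p_j \in \{1,2\}$, alternating, each with a tight window $d_j - r_j = p_j$. Then every job's start time is forced, no exchange is possible, and the number of blocks on the single machine is $\Theta(n)$. Consolidating to $O(1/\varepsilon)$ blocks would require dropping all but $O(1/\varepsilon)$ jobs, which is not a $(1-\varepsilon)$-approximation. More generally, tight windows make local exchange arguments unavailable, and there is no global bound on the number of ``phase changes'' in a near-optimal schedule. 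The paper's actual structural machinery is different and more local: a random hierarchical decomposition of $[0,T]$ into $q=1/\varepsilon^2$-ary nested intervals with a random offset; classification of jobs $\cJ_i$ by span size relative to the level-$i$ interval length; removal of ``span-crossing'' and ``position-crossing'' jobs (Lemmas \ref{lmm:span_crossing}, \ref{lmm:span_optimum_crossing}), which is where randomness earns its keep; and the head-and-tail cutting lemma (Lemma \ref{lmm:head_tail_reduction}), which shows that trimming the span of each ``loose'' job by one level-$i$ interval on each side loses only $O(1/\lambda)$ per processing time. That last lemma — proven by a delicate charging/uncrossing scheme — is the technical heart, and it has no analogue in your proposal. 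The resulting DP is also different from yours: it is not a binary split with interface data at a midpoint, but a $q$-ary split where each interval's subproblem is parameterized by a vector $\vec v$ of blocked spaces left by bigger jobs guessed at higher levels and a vector $\vec u$ counting loose jobs whose span has been re-targeted to a sub-interval; the base case (constantly many release times/deadlines/blocks) reduces to multiple knapsack via Theorem \ref{theo:base}. Your exponent bookkeeping ($c^{O(c/\varepsilon)}$ turning into $n^{O(mc^7/\varepsilon^6)}$) does not actually trace through either; in the paper, $c^7\varepsilon^{-6}$ comes from substituting $\varepsilon \to \varepsilon/c$ into $q^3 c$ with $q = 1/\varepsilon^2$ after accounting for the $O(\varepsilon c)$ head/tail loss. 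In short: the ideas you list are plausible auxiliary tools, but the structural theorem you propose to build on is false, and without the span-class hierarchy plus head/tail cutting you have no way to bound the DP width.
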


So for $m=O(1)$ and $c=O(1)$ we get a Polynomial Time Approximation Scheme (PTAS).
Note that even for the case of $m=1$ and $c=2$, the complexity of the problem has been listed as an open problem in \cite{S12},
however, it has been shown in \cite{EW14} that even for $m=1$ and
$p_j\in\{p,q\}$ where $p$ and $q$ are strictly greater than $1$ the problem is NP-Complete.
Our algorithm for Theorem \ref{theo:main} is obtained by proving some structural properties for near optimum solutions
and by describing a randomized hierarchical decomposition which allows us to do a dynamic programming.
In order to prove this we prove (and use at the base of our DP) the following (easier) special case:

\begin{theorem}\label{theo:base}
Suppose we are given $B$ intervals over the time-line where the machines are pre-occupied and cannot be used to run any jobs,
there are $R$ distinct release times, $D$ distinct deadlines, and $m$ machines, where $R,D,B,m\in O(1)$.
Then there is a PTAS for throughput maximization with time $2^{\eps^{-1}\log^{-4}(1/\eps)}+{\rm Poly}(n)$.
%There is a $(1-\varepsilon)$-approximation algorithm for throughput maximization that runs in time 
%$\Gamma(\log(np_{max}))^{(R+D+B)^3}$, where $p_{max}$ is the largest processing time.
%For $R,D,B,m\in O(1)$ this yields a Polynomial Time Approximation Scheme (PTAS).
\end{theorem}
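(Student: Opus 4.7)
}
The plan is to exploit the $O(1)$-size combinatorial skeleton forced by having $R,D,B,m=O(1)$ and to design a PTAS in which the $\eps$-dependent work is isolated as a bounded enumeration, while all $n$-dependent work is polynomial preprocessing. First, the $R+D+2B=O(1)$ distinguished time points (release dates, deadlines, endpoints of forbidden intervals) partition the horizon into $O(1)$ elementary pieces, and on each machine the maximal contiguous unions of non-forbidden pieces form at most $B+1$ usable \emph{slots}, giving $q := m(B+1) = O(1)$ slots overall. Jobs split into $T \le RD = O(1)$ \emph{types} by $(r_j,d_j)$, and for each (type, slot) pair the effective sub-interval in which a type-$i$ job can be executed is determined by the distinguished times.

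Next, a swap argument establishes the key monotonicity: in some optimal schedule, the jobs of type $i$ used are the $k_i$ shortest of that type. Therefore, specifying the integer vector of counts $x_{s,i}$ for each (slot, type) pair determines the set of jobs used (the $\sum_s x_{s,i}$ shortest of type $i$), and the entire problem reduces to choosing a constant-dimensional vector $(x_{s,i}) \in \mathbb{Z}_{\ge 0}^{qT}$ maximizing $\sum_{s,i} x_{s,i}$ subject to the chosen jobs admitting a non-preemptive placement inside each slot that respects the per-type effective $(r,d)$-restrictions.

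The PTAS then proceeds by a large/small split. Call a type-$i$ job \emph{large in slot $s$} if $p_j \ge \eps L_s / T$, where $L_s$ is the slot length; otherwise \emph{small}. Every slot holds at most $T/\eps$ large jobs, so any schedule contains only $O(qT/\eps) = O(1/\eps)$ large jobs in total. Rounding large-job processing times within each type to $(1+\eps)$-geometric values yields $O(\eps^{-1}\log(1/\eps))$ distinct rounded sizes per type, so a \emph{large profile}, recording the multiplicity of each (slot, type, rounded-size) triple, lies in a space of size at most $2^{O(\eps^{-1}\log^{O(1)}(1/\eps))}$. For each profile I would (i) realize it by picking the shortest jobs of each (type, rounded-size) class; (ii) brute-force over the $(1/\eps)^{O(1/\eps)}$ orderings of the large jobs inside each slot to find a valid non-preemptive placement; and (iii) fill the residual capacity greedily with small jobs, shortest-first within each type. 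Standard knapsack-style arguments show step (iii) loses at most an $\eps$-fraction of the small-job optimum.

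The main technical hurdle is step (ii): jobs of different types inside a single slot have different effective release times and deadlines, so feasibility is not merely a ``sum of sizes $\le L_s$'' test but a small scheduling subproblem in its own right. The saving grace is that there are only $T=O(1)$ distinct effective $(r,d)$-patterns per slot, so I would resolve it via an EDF-like greedy over the $T$ classes combined with the enumeration of the $O(1/\eps)$ large-job positions, which is exactly where the $2^{O(\eps^{-1}\log^{O(1)}(1/\eps))}$ factor arises. Taking the best schedule across all enumerated profiles yields a $(1-\eps)$-approximation whose polynomial preprocessing (sorting, classification, rounding) plus $\eps$-only enumeration matches the claimed running time.
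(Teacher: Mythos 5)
Your proposal takes a genuinely different route from the paper, and while the high-level skeleton (exploit the $O(1)$ combinatorial structure, separate $\eps$-enumeration from $\mathrm{Poly}(n)$ preprocessing, swap-argument to the shortest jobs of each type) is sound, there are concrete gaps that the paper avoids by a different decomposition. The paper does \emph{not} work with your coarse ``slots'' (maximal blocked-free stretches). Instead, it defines \emph{windows} as the elementary intervals between \emph{consecutive} distinguished points (release times, deadlines, and endpoints of blocked intervals), and it first discards the at most $O(R+D+B)$ jobs in $\OPT$ that straddle such a point, losing only a $(1-\eps)$ factor once $\opt$ is large (and brute-forcing otherwise). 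The payoff is decisive: within one elementary window no release time or deadline lies strictly inside, so any job placeable there can be placed \emph{anywhere} in the window, making all job types interchangeable. This lets the paper guess, per window and per type, a contiguous time \emph{allotment}, and then Observation~\ref{obs:indep_r_and_d} decouples the types entirely: each type becomes an independent Multiple Knapsack instance (one knapsack per window), solved with an off-the-shelf PTAS. The running time $2^{\eps^{-1}\log^{4}(1/\eps)}+\mathrm{Poly}(n)$ is exactly that of the Multiple Knapsack PTAS, with the remaining enumeration (allotments rounded to powers of $1+\eps$) being $O(1)$-dimensional.

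Your proposal, by contrast, keeps release times and deadlines inside a slot, and this is where it breaks down. You flag step (ii) as the technical hurdle, but the hand-waved ``EDF-like greedy over the $T$ classes'' does not resolve it: checking whether a guessed multiset of large jobs with heterogeneous $(r,d)$ constraints admits a non-preemptive schedule in a slot is itself a throughput/feasibility question of the kind you set out to solve, and you neither reduce it to a tractable subroutine nor prove the greedy is correct. Step (iii) has the same problem: small jobs of different types may only be placed in different sub-intervals of a slot (bounded by the release times/deadlines interior to the slot), so ``fill residual capacity greedily shortest-first'' is not a single knapsack but a constrained assignment, and the claimed $\eps$-loss bound is not established. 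Finally, rounding \emph{job processing times} (rather than the paper's allotment budgets) to $(1+\eps)$ powers raises a feasibility/optimality tension you do not address: rounding up may make a schedule infeasible, rounding down requires a separate argument that the rounded optimum is close to the true one. In short, the missing ingredient is precisely the paper's straddle-job removal and elementary-window decomposition, which is what makes the problem collapse to independent Multiple Knapsack instances; without it, the intra-slot scheduling subproblems you create are not shown to be tractable.
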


An easy corollary of Theorem \ref{theo:main} is the following.
If the largest processing time $p_{max}=\rm{Poly}(n)$ then we get a quasi-polynomial time $(1-\epsilon)$-approximation
using $(1+\epsilon)$-speed up of machines. This result of course was already obtained in \cite{ILMT15}.

%\begin{theorem}\label{theo:qptas}
%For the throughout maximization problem when $p_{max}$ is $\rm{Poly}(n)$,
% for any $\varepsilon>0$ there is an algorithm that finds a $(1-\varepsilon)$-approximation
%on machines with a $(1+\varepsilon)$-factor speed up, and runs in time 
% $n^{O(\varepsilon^{-13}\log^7 n)}$. This can be generalized to $m$ machines with
%run time $n^{O(\eps^{-13}m\log^7 n)}$.
%\end{theorem}

%Theorems \ref{theo:main} and \ref{theo:qptas} can be extended to the setting where we have $m=O(1)$ identical machines.

%%%%%%%%%%%%%%%%%%%%%%%%%%%%%%%%%%%%%%%%%%%%%%%%%%%%%%%%%%%%%%
\section{Preliminaries}
Recall that we have a set $J$ of $n$ jobs where each job $j \in J$ has a processing time $p_j$, a 
release time $r_j$ as well as a deadline $d_j$, we assume all these are integers in the range $[0,T]$
(we can think of $T$ as the largest deadline).
The jobs are to be scheduled non-preemptively on $m$ machines which can process only one job at a time. 
We point out that we do not require $T$ to be poly-bounded in $n$. 
%Our results can be extended to multiple machine $m=O(1)$ case as well. 
For each job $j \in J$ we refer to $[r_j, d_j]$ as span of job $j$, denoted by $span_j$.
We use $\OPT$ to denote an optimum schedule and $\opt$ the value of it.
In the weighted case, each job $j$ has a weight/profit $w_j$ which we receive if we schedule the job within its span.
The goal in throughput maximization is to find a feasible schedule with maximum weight of jobs.
Like most of the previous works, we focus on the unit weight setting (so our goal is to find a schedule with maximum 
number of jobs scheduled).

%Our approach is heavily dependent on the unit weight jobs and the results cannot be extended to the weighted case.

%For the rest of this article we assume that all processing times are are coming from set a $P$ where $\lvert P \rvert = c$.
%and the time where all the jobs are done, $T$, is polynomial: $T = n^\beta$ for some constant $\beta = O(1)$. 
%We should point out that we do not assume $T$ to be polynomially bounded in $n$. 
We also assume that for each $p \in P$, all the jobs with processing time $p$ in an optimum solution
are scheduled based on earliest deadline first rule; 
 which says that at any time when there are two jobs with the same processing time available the one with the earliest deadline 
would be scheduled. This is known as Jackson rules and we critically use it in our algorithms.

{\bf Outline:} We start by presenting the proof of Theorem \ref{theo:main}. 
%Then we show how it can be used to prove
%Theorem \ref{theo:qptas}. 
We defer the proof of Theorem \ref{theo:base} and several details of proof of Theorem \ref{theo:main}
to Sections \ref{sec:details} and \ref{sec:theobase}.

%%%%%%%%%%%%%%%%%%%%%%%%%%%%%%%%%%%%%%%%%%%%%%%%%%%%%%%%%%%%%%
\section{Proof of Theorem \ref{theo:main}}
%%%%%%%%%%%%%%%%%%%%%%%%%%%%%%%%%%%%%%%%%%%%%%%%%%%%%%%%%%%%%%
In this section we prove Theorem \ref{theo:main}. For ease of exposition, we present the proof for the case of $m=1$ machine
only and then extend it to the setting of multiple machines.

\subsection{Overview of the Algorithm}
At a high level, the algorithm removes a number of jobs so that there is a structured near optimum solution.
We show that the new instance has some structural properties that is amenable to a dynamic programming.
At the lowest level of dynamic programming we have disjoint instances of the problem,
each of which has a set of jobs with only a constant size set of release times
and deadlines, with possibly a constant number of intervals of time being blocked from being used.
For this setting we use the algorithm of Theorem \ref{theo:base}.
We start (at level zero) by breaking the interval $[0,T]$ into a constant $q$ (where $q$ will be dependent on $\varepsilon$) 
number of (almost) equal size intervals, with a random offset. 
Let us call these intervals $a_{0,1},a_{0,2},\ldots,a_{0,q}$. Assume each
interval has size exactly $T/q$, except possibly the first and last (and for simplicity assume $T$ is a power of $q$).
For jobs whose span is relatively large, i.e. spans at least $\lambda$ (where $\frac{1}{\varepsilon} \leq \lambda \leq \varepsilon q$) 
intervals, while their processing time is relatively small (much smaller than $T/q$), based on the random choice 
of break points for the intervals, 
we can assume the probability that the jobs position in the optimum solution is intersecting two intervals is very small. Hence,
ignoring those jobs (at a small loss of optimum), we can assume that each of those jobs are scheduled (in a near optimum solution)
entirely within one interval. For each of them
we ``guess'' which of the $\lambda$ intervals is the interval in which they are scheduled and pass down
the job to an instance defined on that interval. For jobs whose span is very small (fits entirely within one interval), 
the random choice of the $q$ intervals, implies that the probability of their span being ``cut'' by these intervals is 
very small (and again we can ignore those that have
been cut by these break down). For medium size spans, we have to defer the decision making for a few iterations.
We then try to solve each of the $q$ instances, independently and recursively; i.e. we break the intervals again into roughly $q$
equal size intervals and so on. If and when an instance generated has only $O(1)$ release times or deadlines we stop the recursion
and use the algorithm of Theorem \ref{theo:base} to find a near optimum solution. So considering the hierarchical structure of this
recursion, we have a tree with at most $O(\log_q T)$ depth and at most $O(n)$ leaves, which is polynomial in input size.
There are several technical details that one needs to overcome in this paradigm. One particular technical difficulty is for some
jobs we decide to re-define their span to be a smaller subset of their original span by increasing their release time a little
and decreasing their deadline a little. We call this procedure, cutting their ``head'' and ``tail''. 
This will be a key property in making our algorithm work. We will show that under some
moderate conditions, the resulting instance still has a near optimum solution. This allows us to reduce the number of guesses we
have to make in our dynamic program table and hence obtain Theorem \ref{theo:main}. 
We should point out that the idea of changing the span or start/finish of a job was done in earlier works. However, using speed-up
of machines one could ``catch up'' in a modified schedule with a near optimum one. The difficulty in our case is we do not have
machine speed up.

%%%%%%%%%%%%%%%%%%%%%%%%%%%%%%%%%%%%%%%%%%%%%%%%%%%%%%%%%%%%%%
\subsection{Structure of a Near Optimum Solution}
Consider an optimum solution $\OPT$. One observation we use frequently is that such a solution is left-shifted, meaning that
the start time of any job is either its release time or the finish time of another job.
Therefore, we can partition the jobs in schedule $\OPT$ into continuous segments of jobs being run
whose leftmost points are release times and the jobs in each segment are being run back to back. 
We call the set of possible rightmost points of these segments ``slack times''.

\begin{definition} (Slack times). 
Let slack times $\Psi$ be the set of points $t$ such that there is
a release time $r_i$ and a (possibly empty) subset of jobs $J'\subseteq J$, such that $t = r_i + \sum_{j\in J'} p_j$
\end{definition}

So the start time and finish time of each job in an optimum solution is a slack time.
The following (simple) lemma bounds the size of $\Psi$

\begin{lemma}\label{lem:slack}
There are at most $n^{c+1}$ different possible slack times, where $c$ is the number of distinct processing times.
\end{lemma}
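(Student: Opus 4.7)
The plan is a direct counting argument that exploits the hypothesis of $c$ distinct processing times. By definition every slack time has the form $t = r_i + \sum_{j \in J'} p_j$ for some release time $r_i$ and some $J' \subseteq J$, so it suffices to bound separately the number of choices for $r_i$ and the number of distinct values taken by the sum $\sum_{j \in J'} p_j$.

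The number of distinct release times is at most $n$, since each is associated with a job in $J$. For the sum, I would replace the set $J'$ by its profile of multiplicities: letting $p^{(1)}, \ldots, p^{(c)}$ be the distinct processing times and $n_k = |\{j \in J' : p_j = p^{(k)}\}|$, the sum equals $\sum_{k=1}^{c} n_k\, p^{(k)}$ and therefore depends only on the tuple $(n_1, \ldots, n_c)$. Each $n_k$ is an integer in $\{0, 1, \ldots, n\}$, so there are at most $(n+1)^c$ tuples, and hence at most $(n+1)^c$ distinct sum values. Multiplying by the at most $n$ choices for $r_i$ yields $n(n+1)^c \le n^{c+1}$ slack times, matching the stated bound (absorbing a harmless lower-order slack for small $n$).

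There is no real obstacle in this argument; the one substantive point is conceptual. The naive bound of $2^n$ on the number of subsets $J'$ collapses to a polynomial count once we only track how many jobs of each of the $c$ distinct processing-time values are used, which is precisely where the hypothesis on $c$ enters. Distinct subsets $J'$ may of course produce identical sums, but this only strengthens the bound, since we are counting values of $t$ rather than witnesses $(r_i, J')$ producing them.
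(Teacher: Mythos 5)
Your argument is the same as the paper's: both compress the subset $J'$ to its vector of multiplicities over the $c$ distinct processing times, giving at most $n$ choices for the release time and polynomially many distinct values for the sum $\sum_{j\in J'} p_j$. The final inequality $n(n+1)^c \le n^{c+1}$ is literally false, but you flag the slack, and the paper is equally loose (asserting $n^c$ sum values where the multiplicity count really gives $(n+1)^c$), so this off-by-a-lower-order-factor issue is harmless to every use of the lemma.
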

\begin{proof}
We upper bound number of distinct $r_i + \sum_{j\in J'} p_j$ values. First note that there are only $n$ different
$r_i$ values. Also, for each set $J'\subseteq J$, the sum $\sum_{j\in J'} p_j$ can have
at most $n^c$ possible values as the number of jobs in $J'$ with a specific processing time can be at most $n$ and
we assumed there are only $c$ distinct processing times.
\end{proof}

Given error parameter $\varepsilon > 0$ we set $q = 1/\varepsilon^2$, $k = \log_q T$ 
and for simplicity of presentation suppose $T$ is a power of $q$. We define a hierarchical set of partitions on 
interval $[0, T]$. For each $0 \leq i \leq k$, $I_i$ is a partition of $[0, T]$ into $q^{i+1}+1$ many intervals 
such that, except the first and the
 last intervals, all have length $\ell_i=T/{q^{i+1}}$, and the sum of the sizes of the
first and last interval is equal to $\ell_i$ as well.
We choose a universal random offset for the start point of the first interval.
More precisely, we pick a random number $r_0 \in [0, \frac{T}{q}]$ and interval $[0, T]$ is partitioned into $q+1$ intervals
 $I_0=\{a_{0,0}, a_{0,1}, ..., a_{0,q}\}$, where $a_{0,0} = [0, r_0]$, and
 $a_{0,t} = [(t-1)\frac{T}{q}+r_0, t\frac{T}{q}+r_0]$ for $1 \leq t \leq q-1$ and $a_{0,q} = [T - \frac{T}{q} + r_0, T]$. Note
 that the length of all intervals in $I_0$ is $\frac{T}{q}$, except the first and the last which have their length randomly chosen
and the sum of their lengths is $\frac{T}{q}$.

Similarly each interval in $I_0$ will be partitioned into $q$ many intervals to form partition $I_1$ with 
each interval in $I_1$ having length $\frac{T}{q^2}$ except the first interval obtained from breaking $a_{0,0}$ and
 the last interval in $I_1$ obtained from breaking $a_{0,q}$, which may be partitioned into less than $q$ many, 
based on their lengths. All intervals 
in $I_1$ have size $\frac{T}{q^2}$ except the very first one and the very last one. We do this iteratively and break intervals
of $I_i$ (for each $i\geq 0$) into $q$ equal sized intervals to obtain $I_{i+1}$ (with the exception of the very first and the
very last interval of $I_{i+1}$ might have lengths smaller).
%Since we will be doing a dynamic program on these hierarchical partitions,
%we do not really need to continue partitioning an interval $a_{i,t}$ if there are at most $O(1)$ many 
%distinct release times and deadlines within that interval. So the hierarchical decomposition of intervals will actually 
%stop at such an interval since the subproblems defined later can be approximated easily. Therefore, at each level $I_i$
%of this decomposition, there
%are at most $O(n)$ intervals in $I_i$ that will be decomposed into $q$ more intervals in $I_{i+1}$. 
%Thus the number of intervals at each level
%is at most $O(nq)$ and the number of levels is at most $k=\log_q T$. Therefore, 
%the total number of intervals in all partitions is bounded by $O(knq)$.

We set $\lambda = 1/\varepsilon = \varepsilon q$ and partition the jobs into classes $\cJ_0, \cJ_1, ..., \cJ_k, \cJ_{k+1}$, based
 on the size of their span. For each $1 \leq i \leq k$, job $j \in \cJ_i$ if 
$\lambda \cdot \ell_i \leq \lvert span_j \rvert < \lambda \cdot \ell_{i-1}$. Also $j \in \cJ_0$ (and $j \in \cJ_{k+1}$) if
$\lambda \ell_0 \leq \lvert span_j \rvert$ (and $\lvert span_j \rvert < \lambda \cdot \ell_k$).
For each interval $a_{i,t}$ in level $I_i$, we denote the set of jobs whose span is entirely inside $a_{i,t}$ by $J(a_{i,t})$.

Based on our definitions of interval levels and job classes, we can say that for each $0 \leq i \leq k$ if $j \in \cJ_i$,
 then $span_j$ would have intersection with at most $\lambda+1$ (or fully spans at most $\lambda - 1$) many consecutive 
intervals from $I_{i-1}$ and at least $\lambda$ many consecutive intervals from $I_i$. Suppose $j \in I_i$ and $span_j$ has
 intersection with $a_{i, t_j}, a_{i, t_j+1}, ..., a_{i, t'_j}$ from $I_i$, then define $span_j \cap a_{i, t_j}$ and
 $span_j \cap a_{i, t'_j}$ as $head_j$ and $tail_j$, respectively.

We consider two classes of jobs as ``bad'' jobs and show that there is a near optimum solution without any bad jobs.
The first class of bad jobs are those that we call ``span-crossing''.
For each job $j \in J$, we call it "span-crossing" if $j \in \cJ_i$ for some $2 \leq i \leq k+1$
(so $\lambda \cdot \ell_i \leq \lvert span_j \rvert < \lambda \cdot \ell_{i-1}$), and its span has intersection
 with more than one interval in $I_{i-2}$.

\begin{lemma} \label{lmm:span_crossing}
Based on the random choice of $r_0$ (while defining intervals), the expected number of span-crossing jobs in
the optimum solution is at most $\frac{\lambda+1}{q}\opt=O(\varepsilon\opt)$.
\end{lemma}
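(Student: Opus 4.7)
The plan is to bound the expectation by linearity, showing that for each $j\in\cJ_i$ with $2\le i\le k+1$, the probability (over the random offset $r_0$) that $j$ is span-crossing is at most $(\lambda+1)/q$. Summing over the at most $\opt$ jobs in $\OPT$ yields the claimed bound, and the identity $\lambda/q=\varepsilon$ (from $\lambda=1/\varepsilon$ and $q=1/\varepsilon^2$) makes it $O(\varepsilon\opt)$.

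For the per-job bound, first I would observe that the break points of $I_{i-2}$ form an arithmetic progression of common difference $\ell_{i-2}$ shifted by $r_0$, since the whole hierarchy is built from the single random offset $r_0$ by iterated $q$-ary subdivision. Because $\ell_0/\ell_{i-2}=q^{i-2}$ is a positive integer for $i\ge 2$, and $r_0$ is uniform on $[0,\ell_0]$, the quantity $r_0\bmod \ell_{i-2}$ is uniform on $[0,\ell_{i-2})$. Thus the regular grid of $I_{i-2}$-break points, viewed modulo $\ell_{i-2}$, is uniformly distributed.

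Next, each $j\in\cJ_i$ has $|span_j|<\lambda\ell_{i-1}=(\lambda/q)\ell_{i-2}=\varepsilon\ell_{i-2}<\ell_{i-2}$, so the span can contain at most one break point of $I_{i-2}$; the job is span-crossing precisely when such a break point lands strictly inside its span, an event of probability $|span_j|/\ell_{i-2}\le\lambda/q$. A unit of slack in the numerator absorbs the contribution of the two irregular extreme intervals near $0$ and $T$, giving the stated $(\lambda+1)/q$. Linearity of expectation then yields that the expected number of span-crossing jobs in $\OPT$ is at most $\frac{\lambda+1}{q}\opt=O(\varepsilon\opt)$.

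The main obstacle in this sketch is verifying carefully that the random shift really does behave uniformly at every level of the hierarchy, including the extreme case $i=k+1$ (very tiny spans), and that the two irregular first/last intervals of each $I_i$ only contribute the small additive error absorbed by the $+1$ slack. Once this (routine but technical) boundary bookkeeping is done, the rest of the proof is a single shift calculation per job class combined with linearity of expectation.
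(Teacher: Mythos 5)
Your proof is correct and takes essentially the same approach as the paper: use the random offset to bound the per-job probability of being span-crossing by $O(\lambda/q)$, then apply linearity of expectation. The paper counts that $span_j$ meets at most $\lambda+1$ consecutive $I_{i-1}$ intervals and union-bounds over which of those boundaries coincides with an $I_{i-2}$ boundary; you instead observe directly that $r_0\bmod\ell_{i-2}$ is uniform (using $\ell_0/\ell_{i-2}=q^{i-2}\in\mathbb{Z}$ for $i\ge 2$) and that $|span_j|<(\lambda/q)\ell_{i-2}<\ell_{i-2}$, so the grid hits the open span with probability at most $\lambda/q$. One small point: the $+1$ slack you invoke for the irregular first/last intervals is actually not needed, since every span lies inside $[0,T]$ and thus cannot strictly contain the endpoints $0$ or $T$; all remaining $I_{i-2}$ break points belong to the shifted arithmetic grid, so your own argument already yields $\lambda/q$, which is at most the claimed $(\lambda+1)/q$.
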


\begin{proof}
Observe that because $j \in \cJ_i$, we have $\lvert span_j \rvert < \lambda \cdot \ell_{i-1}$. This means that the $span_j$ would 
have intersection with at most $\lambda+1$ (or fully spans at most $\lambda - 1$) many consecutive intervals from $I_{i-1}$. 
Also because of the random offset while defining $I_0$, and since $\ell_{i-2}=q\cdot \ell_{i-1}$,
the probability that job $j$ being "span-crossing" will be 
at most $\frac{\lambda+1}{q}$. 
\end{proof}

So, we can assume with sufficiently high probability, that there is a $(1-O(\varepsilon))$-approximate solution
 with no span-crossing jobs. 
The second group of bad jobs are defined based on their processing time and their position in the optimum solution. We then prove 
that by removing these type of jobs, the profit of the optimum solution will be decreased by a small factor. 
For each job $j \in J$, we call it "position-crossing" if $\ell_i \leq p_j < \ell_{i-1}$ for some 
$2 \leq i \leq k+1$, and its position in $\OPT$ has intersection with more than one interval in $I_{i-2}$.

\begin{lemma} \label{lmm:span_optimum_crossing}
The expected number of position-crossing jobs in $\OPT$ is at most $\frac{1}{q}\opt=O(\varepsilon^2\opt)$. 
%Given an optimum solution $\OPT$, removing ``position-crossing'' jobs results in a feasible schedule of value at 
%least $(1-O(\varepsilon))\opt$.
%After defining the interval levels randomly, we can remove all "span-crossing" jobs and there is still a near optimum solution
% for the remaining instance with expected total profit of at least $(1 - 2\varepsilon) OPT$ such that for all scheduled jobs $j$ 
%and for each $1 \leq i \leq k$ with $p_j < \ell_i$, the position of $j$ is entirely in an interval of $I_{i-1}$ (or there is no
% "position-crossing" jobs in the solution).
\end{lemma}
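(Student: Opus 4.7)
The plan is to mirror the proof of Lemma~\ref{lmm:span_crossing}, but with the key quantity being the processing time $p_j$ rather than the span length, and the target level being $I_{i-2}$ rather than $I_{i-1}$. The intuition is that the boundaries of $I_{i-2}$ are spaced $\ell_{i-2}$ apart, while a position-crossing job has length $p_j < \ell_{i-1} = \ell_{i-2}/q$, so a uniformly random offset hides a boundary inside the job's footprint with probability $1/q$.

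First, I would fix an arbitrary optimum schedule $\OPT$ (a deterministic object depending only on the input) and fix a job $j\in\OPT$ with processing time satisfying $\ell_i\le p_j<\ell_{i-1}$ for some $2\le i\le k+1$; let $[t,t+p_j]$ be its position under $\OPT$. Job $j$ is position-crossing precisely when the open interval $(t,t+p_j)$ contains at least one boundary of $I_{i-2}$. By construction, the boundaries of $I_{i-2}$ that lie strictly inside the ``middle'' portion of $[0,T]$ form an arithmetic progression with common difference $\ell_{i-2}$ offset by $r_0$, i.e.\ they are the points $r_0+s\cdot\ell_{i-2}$ for integers $s\ge 1$. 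Since $r_0$ is uniform on $[0,T/q]=[0,\ell_0]$ and $\ell_{i-2}$ divides $\ell_0$ for every $i\ge 2$, the value $r_0\bmod\ell_{i-2}$ is uniformly distributed on $[0,\ell_{i-2}]$.

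Next, I would observe that $(t,t+p_j)$ contains one of these boundaries iff $r_0\bmod\ell_{i-2}$ falls in a particular window of width $p_j$ (the window depending on $t$). Because $p_j<\ell_{i-1}=\ell_{i-2}/q$, this event has probability at most
\[
\frac{p_j}{\ell_{i-2}} \;<\; \frac{\ell_{i-1}}{\ell_{i-2}} \;=\; \frac{1}{q}.
\]
Applying linearity of expectation over the $\opt$ jobs scheduled in $\OPT$ then yields the desired bound $\opt/q=O(\varepsilon^2\,\opt)$.

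The main obstacles I anticipate are small but worth flagging. First, the very first and very last intervals of $I_{i-2}$ do not have length $\ell_{i-2}$, so the clean modular argument does not directly cover boundaries arising from the leftmost or rightmost endpoints of $[0,T]$; however, those two special boundaries are fixed (at $0$ and $T$) and cannot be crossed by a job scheduled within $[0,T]$, so they contribute nothing. Second, one must be a little careful to note that the event of being position-crossing at level $i$ is well-defined because the classification $\ell_i\le p_j<\ell_{i-1}$ partitions the processing times; this is immediate from the geometric progression of the $\ell_i$'s. With these two observations in place, the rest is just the per-job probability bound followed by linearity of expectation.
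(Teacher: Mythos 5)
Your proof is correct and follows essentially the same approach as the paper: fix a job with $\ell_i \le p_j < \ell_{i-1}$, note that a boundary of $I_{i-2}$ lands strictly inside its footprint with probability at most $p_j/\ell_{i-2} < 1/q$ by uniformity of the random offset, and apply linearity of expectation. The paper states the per-job probability bound without spelling out the modular-uniformity argument, so your version is simply a more detailed rendering of the same proof.
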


\begin{proof}
Consider $\OPT$ and suppose that $j \in J$ is a job with $\ell_i \leq p_j < \ell_{i-1}$.
Observe that $j$ can have intersection with at most $2$ intervals in $I_{i-2}$ because of its size. 
Considering our random offset to define interval levels, the probability of job $j$ 
being a position-crossing (with respect to the random intervals defined)
 would be at most $\frac{1}{q}$ (since $p_j<\ell_{i-1}=\frac{\ell_{i-2}}{q}$).
%Considering lemma \ref{lmm:span_crossing} we can say that after removing all "span-crossing" jobs, there
% would be an optimum solution without any "position-crossing" jobs and total profit at least $(1-\frac{\lambda+1}{q})OPT$ and
% we have $\frac{\lambda+1}{q} < \frac{2\lambda}{q} = 2 \varepsilon$.
Thus, the expected number of position-crossing jobs in $\OPT$ is at most $\opt/q$. 
\end{proof}

Hence, using Lemmas \ref{lmm:span_crossing} and \ref{lmm:span_optimum_crossing}, with sufficiently high probability,
there is a solution of value at least $(1-O(\varepsilon))\opt$ without any span-crossing or position-crossing jobs.
We call such a solution a canonical solution.

From now on, we suppose the original instance $\cI$ is changed to $\cI'$ after we first defined the intervals 
randomly and removed all the span-crossing jobs. So we focus (from now on) on finding a near optimum
feasible solution to $\cI'$ that has no
position-crossing jobs. By $\OPT'$ we mean such a solution of maximum value for $\cI'$; we call that a canonical 
optimum solution.
If we find a $(1-O(\varepsilon))$-approximation to $\OPT'$ (that has no position-crossing jobs), then
 using the above two lemmas we have a
$(1-O(\varepsilon))$-approximate solution to $\cI$. So with $\OPT'$ being an optimum solution to $\cI'$ with no 
position-crossing jobs we let $\opt'$ be its value.

%%%%%%%%%%%%%%%%%%%%%%%%%%%%%%%%%%%%%%%%%%%%%%%%%%%%%%%%%%%%%%%%%%%%%%%
\subsection{Finding a Near Optimum Canonical Solution }
As a starting point and warm-up, we consider the special case where 
instance $\cI'$ only consists of jobs whose processing time is relatively big compared to their
span and show how the problem could be solved. Consider the extreme case where
for each $j\in J$, $p_j = \lvert span_j \rvert$. 
In this case the problem will be equivalent to the problem of finding a maximum
 independent set in an interval graphs which is solvable in polynomial time \cite{G04}.
The following theorem shows that if $p_j\geq \frac{\lvert span_j \rvert}{\lambda}$ for each $j\in J$ (which 
we call them ``tight'' jobs),
 then we can find a good approximation as well. Therefore, it is the ``loose'' jobs (those whose processing time $p_j$
is smaller than $\frac{|span_j|}{\lambda}$) that make the problem difficult.
(we should point out that Chuzhoy et al. \cite{COR06} also considered this special case and presented a DP algorithm
with run time $O(n^{\rm{Poly}(\lambda)}T^4)$ however, their DP table is indexed by integer points on the time-line and
the polynomial dependence on $T$, which can be exponential in $n$, is unavoidable).
% We will later show how we can ``loose'' and ``tight'' jobs together.
The idea of the dynamic program of the next theorem is the basis of the more general case that we will prove later that handles
``loose'' and ``tight'' jobs together but the following theorem is easier to understand and follow and we present it as a warm-up
for the main theorem.

\begin{theorem}\label{thm:relatively_big_processing_times}
If for all $j\in J$ in $\cI'$, $p_j \geq \frac{\lvert span_j \rvert}{\lambda}$ then there is a dynamic programming
 algorithm that finds a canonical solution for instance $\cI'$ with total profit $\opt'$ 
in time $O(\varepsilon^{-1}n^{\varepsilon^{-2}c}\log T)$.
\end{theorem}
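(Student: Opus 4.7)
The plan is to design a top-down dynamic program on the hierarchical tree of intervals $I_0,\ldots,I_k$ constructed in the previous subsection, for the single-machine case $m=1$. Randomization in the offset $r_0\in[0,T/q]$ is removed by enumerating $O(\varepsilon^{-1})$ representative offsets, which contributes the $\varepsilon^{-1}$ factor in the running time. Fix one such offset and let $\OPT'$ be a canonical optimum. Tightness gives $p_j\geq\ell_s$ for each $j\in\cJ_s$, and non-span-crossing places $span_j$ inside a unique interval of $I_{s-2}$; assign each $j$ to the deepest hierarchy interval whose length is at least $|span_j|$ and call this the \emph{owner} of $j$. The key counting claim is that at any interval $a$ at level $l$, the canonical solution schedules $O(q)=O(\varepsilon^{-2})$ jobs owned by $a$: the class $\cJ_{l+1}$ contributes at most $|a|/\ell_{l+1}=q$ directly, and the subclass of $\cJ_{l+2}$ owned by $a$ (those whose span straddles a level-$(l+1)$ boundary inside $a$) contributes $O(q)$ more, because there are only $q-1$ internal boundaries and each can be straddled by $O(1)$ scheduled positions by Jackson's rule and single-machine non-overlap.

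For each non-empty $(a,\sigma)$ the DP stores $DP(a,\sigma)$, the maximum number of canonical-solution jobs with $span\subseteq a$ that can be placed given a ``boundary state'' $\sigma$ describing the reservations inside $a$ caused by jobs owned by strict ancestors of $a$. Tightness together with single-machine non-overlap is intended to bound $|\sigma|\leq n^{O(c)}$, with the reserved regions described by $O(1)$ slack-time endpoints from Lemma~\ref{lem:slack}. The transition at $(a,\sigma)$ enumerates the $\leq O(q)$ owned jobs scheduled in $a$ by choosing, for each of the $\leq O(q)$ positions, a start slack time (at most $n^{c+1}$ options) and a processing time in $P$ (at most $c$ options); Jackson's rule then pins down the scheduled job uniquely. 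The enumeration therefore has size $(cn^{c+1})^{O(q)}=n^{O(cq)}$, and for each enumeration we compute the induced reservations $\sigma(a')$ on each child $a'\in I_{l+1}$ and set
\begin{equation*}
DP(a,\sigma)=\max_{\mathrm{enum}}\Bigl[|S|+\sum_{a'\text{ child of }a}DP(a',\sigma(a'))\Bigr].
\end{equation*}
The base case $l=k$ drops the sum; $\cJ_0$ and any $\cJ_1$ jobs whose spans do not fit in an $I_0$ interval are handled at a virtual root in the same manner. Since each slack time lies in exactly one interval per level and there are $O(\log_q T)=O(\log T)$ levels, the tree has $O(n\log T)$ non-empty nodes, giving the claimed $O(\varepsilon^{-1} n^{O(c/\varepsilon^2)}\log T)$ total running time.

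The main obstacle is simultaneously making both counts match the claimed exponent $c/\varepsilon^2$: the per-interval owned-job count and the ancestor-reservation state size. The former rests on the boundary-straddling argument, which must show that in a canonical schedule at most $O(1)$ scheduled jobs have a position straddling any given internal boundary---this is where tightness and Jackson's rule interact in a delicate way, since a priori $\cJ_{l+2}$ jobs with $p_j$ as small as $\varepsilon\ell_l$ could pack densely near a boundary. The latter rests on choosing the owner level to be the deepest-fitting one, so that any job with small processing time is absorbed into a deep subtree rather than registered as an ancestor reservation on shallower intervals, leaving only the $O(1)$ ancestor jobs with $p_j\geq|a|$ to encode into $\sigma$. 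Establishing both of these counts rigorously against $\OPT'$ without losing any canonical-solution value is the technically intricate heart of the proof.
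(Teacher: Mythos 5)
Your architecture---top-down DP on the random hierarchy with state encoding ancestor reservations, per-node guessing of the jobs that ``belong'' to that level, and Jackson's rule to pin down identities---is conceptually aligned with the paper. However, the two quantitative claims you flag as the crux of the argument are in fact not true as stated, and the paper's proof does not try to establish them; it uses a coarser and simpler counting.

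The first gap is the owned-jobs-per-node count. You claim a node at level $l$ owns $O(q)$ scheduled jobs, arguing via $\cJ_{l+1}$ plus the $\cJ_{l+2}$ jobs whose spans straddle one of the $q-1$ internal child boundaries, each boundary ``straddled by $O(1)$ scheduled positions.'' This conflates span-straddling with position-straddling. A job owned by $a$ under your definition has a span that does not fit in any child of $a$, hence crosses a boundary, but its \emph{scheduled position} in a canonical optimum may sit entirely inside one child; the single-machine non-overlap constraint limits position-straddlers to one per boundary but puts no such limit on span-straddlers, so your $O(q)$ bound does not follow. The paper instead uses only tightness: any $j\in\cJ_{i+2}\cap J(a_{i,t})$ has $p_j\geq|span_j|/\lambda\geq\ell_{i+2}=\ell_i/q^2$, so at most $q^2$ of them fit in $a_{i,t}$, and the per-node guess is $n^{O(q^2c)}$. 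The second gap is the state-size claim. You want $\sigma$ to contain only $O(1)$ slack-time endpoints, hence $|\sigma|\leq n^{O(c)}$; but an interval $a_{i,t}$ can intersect up to $\ell_i/\ell_{i+1}=q$ jobs from $\cJ_{\leq i+1}$ (each of processing time $\geq\ell_{i+1}$), so the paper's vector $\vec{v}$ has $q$ entries and $n^{O(qc)}$ possible values, not $O(1)$. Neither of your two tighter bounds is available, and you correctly identify in your final paragraph that you have not established them---so the proposal is an incomplete proof. (Note also that the paper's own proof yields run time $kq\,n^{O(q^2c)}$, which is $n^{O(\varepsilon^{-4}c)}$, looser than the $n^{\varepsilon^{-2}c}$ in the theorem statement you were targeting; the exponent in the statement does not appear to be supported even by the paper's own argument.)

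Two further structural mismatches: the paper terminates the recursion at any interval with $O(1)$ distinct release times and deadlines and there invokes Theorem~\ref{theo:base}, rather than recursing to level $k$ and simply dropping the sum; and the paper does not derandomize the offset---the $\varepsilon^{-1}$ factor is not obtained by enumerating offsets, so your derandomization, while a reasonable standalone idea, is not what the proof does.
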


\begin{proof}
Recall that $k=\log_q T$ and observe that for each $0 \leq i \leq k-1$ and each $j \in \cJ_i$:
 $\lambda \cdot \ell_i \leq \lvert span_j \rvert \leq \lambda p_j$, so $\ell_i \leq p_j$. Now if we somehow know 
$\OPT' \cap \cJ_0$ and $\OPT' \cap \cJ_1$ and remove the rest of jobs in $\cJ_0$ and $\cJ_1$, then the remaining jobs 
(which are all in $\cJ_{i\geq 2}$)
have intersection with exactly one interval in $I_0$ (recall we have no span-crossing or position-crossing jobs), 
hence we would have $q+1$ many independent sub-problems (defined on the $q+1$ sub-intervals partitioned in level $0$)
with jobs from $\cJ_{i\geq 2}$. 

So our first task is to ``guess'' the jobs in $\OPT'\cap (\cJ_0 \cup \cJ_1)$ (as well as their positions) and 
then remove the rest of the jobs
in $\cJ_0\cup \cJ_1$ from $J$ as well as the jobs whose span is crossing any of the intervals in $I_0$; 
then recursively solve the problem
on independent sub-problems obtained for each interval in $I_0$ together with the jobs whose spans are entirely within such interval.
In order to guess the positions of jobs in $\OPT'\cap (\cJ_0 \cup \cJ_1)$ we use the fact that each job can start at a slack time.
Since jobs in $\cJ_0\cup \cJ_1$ have size at least $\ell_1=T/q^2$, we can have at most $q^2$ of them in a solution. 
We guess a set $S$ of size at most $q^2$ of such jobs and a schedule for them; there are at 
most $|\Psi|^{q^2}=n^{O(q^2c)}$ choices for the schedule of $S$. Then we remove
the rest of $\cJ_0$ and $\cJ_1$ from $J$ for the rest of our dynamic programming. The guessed schedule of $S$ defines
a vector $\vec{v}$ of blocked spaces (those that are occupied by the jobs from $S$) and 
for each interval $a_{0,t}$, the projection of vector $\vec{v}$ in interval $a_{i,t}$, denote it by $\vec{v}_t$, has dimension at most $q$
($a_{0,t}$ has length $\ell_0 = T/q$ and each job in $S$ has length at least $\ell_1 = T/q^2$).
We pass each such vector $\vec{v}$ to the corresponding sub-problem.

%Suppose $V(\ell)$ is the set of all bit vectors with size $\ell$, obviously $\lvert V(\ell) \rvert = 2^\ell$ for each integer
% $\ell$. Also for each feasible schedule of some jobs in an interval of length $\ell$, we can assign a bit vector $\vec{v}$ of
% size $\ell$, $\vec{v} = (v_0, v_1, ..., v_{\ell-1})$ where for each $0 \leq t \leq \ell$ if machine is busy on time $t$, then
% $v_t = 1$, otherwise $v_t = 0$. We define $V_p(\ell)$ as the set of all bit vectors to which we can assign a schedule using 
%only jobs with processing times at least $p$. For example, all $2^{\ell}$ bit vectors are in $V_1(\ell)$, though only some of
% them are in $V_2(a)$. Another example would be $V_{\ell}(\ell)$ which only contains two members (all $1$ and all $0$).
% Generally $V_{\ell}(\ell) \subseteq V_{\ell - 1}(\ell) \subseteq ... \subseteq V_2(\ell) \subseteq V_1(\ell)$. Here is an 
%obvious upper-bound on the size of $V_p(\ell)$ for some $1 \leq p \leq \ell$:

%\begin{equation*}
%\lvert V_p(\ell) \rvert \leq {\ell}^{\frac{2\ell}{p}}
%\end{equation*}

%The reason is that we could have at most $\frac{\ell}{p}$ many jobs with processing time at least $p$ and for each one we could
% have at most $\ell$ many options for the starting time slot and $\ell$ many for the ending one.

Consider an interval $a_{i,t} \in I_i$ for some $0 \leq i \leq k$ and $0 \leq t \leq \frac{T}{\ell_i}$. Recall
that the set of jobs $j \in J$ whose span is completely inside $a_{i,t}$ is $J(a_{i,t})$. Because of the assumption 
of no span-crossing jobs, for each job $j \in J \setminus J(a_{i,t})$, if its span has
intersection with $a_{i,t}$, then it would be in $\cJ_{i'}$ for some $i' \leq i+1$
(jobs from $\cJ_{i+2}$ are entirely within one interval of level $I_i$) and $\lvert span_j \rvert$ would be at least
 $\lambda \ell_{i+1}$, and hence $p_j\geq \ell_{i+1}$. Thus we can have at most $\ell_i/\ell_{i+1}=q$ such jobs.
Assume we have a guessed vector $\vec{v}$ of length $q$ where each entry of the vector denotes 
the start time as well as the end time of one of such jobs. This vector describes the sections of $a_{i,t}$ that
are blocked for running such jobs from $J\setminus J(a_{i,t})$. The number of guesses for such vectors $\vec{v}$ is
at most $n^{2q(c+1)}$ based on the bounds on the number of slack times. Given $\vec{v}$ and $J(a_{i,t})$ we want to schedule
the jobs of $J(a_{i,t})$ in the free (unblocked by $\vec{v}$) sections of $a_{i,t}$.

Now we are ready to precisely define our dynamic programming table. For each $a_{i,t}$ and for each $q$-dimensional vector
 $\vec{v}$, we have an entry in our DP table $A$. This entry, denoted by $A[a_{i,t}, \vec{v}]$, 
will store the maximum throughput for an schedule of jobs running during interval $a_{i,t}$, 
using jobs in $J(a_{i,t})$ by considering the free slots defined by $\vec{v}$. 
The final solution would be $\max_{S}\{\sum_{t} A[a_{0,t}, \vec{v}_t]+|S|\}$, where the max is taken over
all guesses $S$ of jobs from $\cJ_0\cup\cJ_1$ and $\vec{v}_t$ is the blocked area
of $a_{i,t}$ based on $S$.

The base case is when $a_{i,t}$ has only constantly many release/deadline times. Given that we have also only constantly many
processing times and $\vec{v}$ defines at most $q$ many sections of blocked (used by bigger jobs) areas, then
using Theorem \ref{theo:base} we can find a $(1-O(\eps))$-approximation in time $\Gamma$, where
$\Gamma$ is the running time of the PTAS for Theorem \ref{theo:base}.
%Because the length of interval $a_{k,t}$ is constant ($\ell_k = \frac{T}{q^k}$ and $k = O(\log n)$), we can easily find the 
%optimal solution for $A[a_{k,t}, \vec{v}]$ for each $0 \leq t \leq \frac{T}{\ell_k}$ and any vector $\vec{v}$ in polynomial time.
% So the base case is done.

We can bound the size of the table as follows.
First note that we do not really need to continue partitioning an interval $a_{i,t}$ if there are at most $O(1)$ many 
distinct release times and deadlines within that interval, since this will be a base case of our 
dynamic program. So the hierarchical decomposition of intervals $I_0,I_1,\ldots,I_k$ will actually 
stop at such an interval $a_{i,t}$ when there are at most $O(1)$ release times and deadlines. 
Therefore, at each level $I_i$ of the random hierarchical decomposition, there
are at most $O(n)$ intervals in $I_i$ that will be decomposed into $q$ more intervals in $I_{i+1}$ (namely those that have
at least a constant number of release times and deadlines within them). 
Thus the number of intervals at each level $I_i$
is at most $O(nq)$ and the number of levels is at most $k=\log_q T$. Therefore, 
the total number of intervals in all partitions is bounded by $O(knq)$.
To bound the size of the table $A$, each $\vec{v}$ has $n^{2q(c+1)}$ many options, based on the fact that we have
at most $n^{c+1}$ many choices of start time and end time (from the set $\Psi$ of slacks) 
for each of the $q$ dimensions of $\vec{v}$. 
Also as argued above, there are $O(knq)$ many intervals $a_{i,t}$ overall. So the size of table is at most $kqn^{O(qc)}$.

Now we describe how to fill the entries of the table. 
To fill $A[a_{i,t}, \vec{v}]$ for each $0 \leq i \leq k-1$ and $0 \leq t \leq \frac{T}{\ell_i}$, suppose $a_{i,t}$ is divided 
into $q$ many equal size intervals $a_{i+1, t'+1}, ..., a_{i+1, t'+q}$ in $I_{i+1}$. We first guess a subset $\tilde{J_{i,t}}$ 
of jobs from $\cJ_{i+2} \cap J(a_{i,t})$, to be processed during interval $a_{i,t}$ consistent with free slots defined by $\vec{v}$. 
This defines a new vector $\vec{v}'$ that describes the areas blocked by jobs guessed recently as well as those blocked by $\vec{v}$.
Projection of $\vec{v}'$ onto the $q$ intervals $a_{i+1, t'+1}, ..., a_{i+1, t'+q}$ defines
$q$ new vectors $\vec{v}'_1, ..., \vec{v}'_q$. Now we check the sum of 

\[ A[a_{i+1, t'+1}, \vec{v}'_1]+A[a_{i+1, t'+2}, \vec{v}'_2] + ... + A[a_{i+1, t'+q}, \vec{v}'_q] + |\tilde{J_{i,t}}|\] 

We would choose the $\tilde{J_{i,t}}$ which maximizes the above sum. Observe that jobs in $J(a_{i,t})\setminus\cJ_{i+2}$
have length at most $\ell_{i+3}$ and because we have no position-crossing jobs, each of them is inside one of
intervals $a_{i+1, t'+1}, ..., a_{i+1, t'+q}$ and would be considered in sub-problems.

Note that to fill each entry $A[a_{i,t}, \vec{v}]$ the number of jobs from $\cJ_{i+2}$ possible to be processed in $a_{i,t}$
 would be at most $q^2$, because of their lengths. So the total number of guesses would be at most $n^{O(q^2c)}$. 
This means that we can fill the whole table in time at most $kqn^{O(q^2c)}$, where $q=1/\eps^2$ and $k=\log_q T$.
\end{proof}

%Also note that, Theorem \ref{thm:relatively_big_processing_times} would be still valid if we set $q = \frac{1}{\varepsilon^2}$
% and $\lambda = \frac{1}{\varepsilon}$ which makes it independent of $c$. This means that 
%Theorem\ref{thm:relatively_big_processing_times} would work in general, not only when $c$ is constant.

Considering Theorem \ref{thm:relatively_big_processing_times}, we next show how to handle ``loose'' jobs,
i.e. those for which $p_j < \frac{|span_j|}{\lambda}$.
Recall that for each $0 \leq i \leq k$ and for each $j \in \cJ_i$, if $span_j$ has intersection with intervals 
$a_{i, t_j}, a_{i, t_j+1}, ..., a_{i, t'_j}$ of $I_i$, then we denote $span_j \cap a_{i, t_j}$ and $span_j \cap a_{i, t'_j}$
 as the head and tail of (span of) $j$, respectively. 
Our next (technical) lemma states that if we reduce the span of each loose job by removing its head and tail then there is still
a near optimum solution for $\cI'$.
More specifically, for each job loose $j\in \cJ_i$ ($p_j \leq \frac{\lvert span_j \rvert}{\lambda}$), 
whose span has intersection with intervals $a_{i, t_j}, a_{i, t_j+1}, ..., a_{i, t'_j}$ of $I_i$, we replace its
release time to start at the beginning of $a_{i,t_j+1}$ and its deadline to be end of $a_{i,t'_j-1}$;
so $span_j$ will be replaced with with $span_j \setminus (a_{i, t_j} \cup a_{i, t'_j})$. 
Let this new instance be called $\cI''$.
Note that a feasible solution for instance $\cI''$ would be still a valid solution for $\cI'$ as well.

\begin{lemma}\label{thm:head_tail_reduction}
Starting from $\cI'$, let $\cI''$ be the instance obtained from removing the head and tail part of $span_j$ for each 
job $j \in J$ with $p_j \leq \frac{\lvert span_j \rvert}{\lambda}$. Then 
there is a canonical solution for $\cI''$ with throughput at least $(1 - 120\varepsilon c)\opt'$.
\end{lemma}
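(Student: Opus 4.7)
The plan is to start from the canonical optimum $\OPT'$ of $\cI'$ and produce a canonical schedule for $\cI''$ by deleting at most $120\eps c\cdot\opt'$ jobs. I will call a loose job $j\in\cJ_i\cap\OPT'$ \emph{bad} if its scheduled position $[s_j,s_j+p_j]$ in $\OPT'$ overlaps its head interval $a_{i,t_j}$ or its tail interval $a_{i,t'_j}$; any such bad job is infeasible for $\cI''$, since its new span is $span_j\setminus(a_{i,t_j}\cup a_{i,t'_j})$, and must therefore be either removed from the schedule or repositioned inside its middle region. The proof reduces to bounding the total number of such adjustments.

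The bound will be obtained by analyzing one processing time at a time, which is what produces the factor of $c$. Fix $p\in P$ and, for each level $i$, let $L_{i,p}$ be the set of loose jobs in $\cJ_i\cap\OPT'$ of processing time $p$. By Jackson's rule $L_{i,p}$ is scheduled in deadline order in $\OPT'$, with consecutive start times separated by at least $p$. This yields the local bound that for every level-$i$ interval $a_{i,t}$ at most $\lceil\ell_i/p\rceil$ jobs of $L_{i,p}$ start inside $a_{i,t}$ (and similarly at most $\lceil\ell_i/p\rceil$ end inside $a_{i,t}$), which controls how many head-bad and tail-bad jobs can have head or tail equal to $a_{i,t}$.

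The key quantitative ingredient is the looseness slack $p_j\leq |span_j|/\lambda$: the middle region satisfies $|M_j|\geq |span_j|-2\ell_i\geq (1-2\eps)\,|span_j|$, so each middle is only a $(1-2\eps)$-factor smaller than the corresponding span. I would exploit this by, for each $(i,p)$, rescheduling $L_{i,p}$ inside the shrunken middle spans $[e_l,f_l-p]$ using Jackson's EDF rule; because the middles lose only an $O(\eps)$ fraction of the span, a standard EDF/exchange comparison shows that only an $O(\eps)$ fraction of $L_{i,p}$ can fail to find a feasible placement. Interleaving these re-scheduled timelines back into a single machine schedule and summing the per-processing-time losses over the $c$ distinct values of $p$ gives a total deletion budget of $O(\eps c)\,\opt'$, which is at most $120\eps c\cdot\opt'$ for a suitable constant.

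The hardest part will be controlling the interaction between the re-schedulings of different processing times on the shared machine. The naive local bound ``at most $\ell_i/p$ bad jobs per head interval $a_{i,t}$'' is not by itself strong enough, because the number of distinct head intervals may be as large as $|L_{i,p}|$. I therefore expect the proof to process the $c$ processing times sequentially and, at each step, charge the displacements caused on already-placed jobs to the $(1-2\eps)$-slack that each middle enjoys, using Jackson's rule (spacing $\geq p$) together with the per-job slack $|M_j|-p\geq(\lambda-2)\ell_i-p$ to keep the cascading loss under control; this combination is precisely where the $c$ factor in the final bound arises.
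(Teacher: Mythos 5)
Your high-level plan (process one processing time at a time to get the factor of $c$, then for each $p$ move jobs scheduled in their head into their middle region and bound what cannot be moved) matches the paper's skeleton: the paper proves a Head-and-Tail Cutting lemma (Lemma~\ref{lmm:head_tail_reduction}) giving a $(1-\tfrac{60}{\lambda})$ loss for one fixed $p$ and one side (head or tail), and then multiplies by $2c$ to obtain the $(1-120\varepsilon c)$ bound. So the outer reduction is correct.

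However, the heart of the proposal --- that ``a standard EDF/exchange comparison shows that only an $O(\varepsilon)$ fraction of $L_{i,p}$ can fail to find a feasible placement'' --- is asserted, not proved, and the mechanism you invoke is not actually the one that makes the bound go through. The $(1-2\varepsilon)$ \emph{length} slack of the middle region does not by itself control how many jobs fail: it is entirely possible that \emph{every} job of $L_{i,p}$ whose head lies in some $a_{i,t}$ fails to move, because the middle $Y_{i,t}=a_{i,t+1}\cup\cdots\cup a_{i,t+\lambda-1}$ is already packed with other work. The paper's actual argument is a charging argument, not a rescheduling-slack argument: if jobs of size $p$ cannot be moved into $Y_{i,t}$, one first evicts \emph{larger} jobs from $Y_{i,t}$ (each eviction makes room for $\alpha_{i,t}=\lfloor p'_{i,t}/p\rfloor$ jobs of size $p$, see the construction of $R_{i,t}$), and if even then some remain unplaceable it means $Y_{i,t}$ is dense with jobs of size $\le p'_{i,t}$; those surviving jobs $S_{i,t}$ outnumber the removed ones $R_{i,t}$ by a factor $\Theta(\lambda)$ (Lemma~\ref{lmm:RtoS}). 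No ``exchange argument'' is involved; the profit you lose is charged to profit you keep inside $Y_{i,t}$, not absorbed by slack in the span.

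You also mislocate the genuine difficulty. You expect the hard part to be the interaction between re-schedulings of different processing times, and you propose to ``process the $c$ processing times sequentially and charge cascading displacements to the slack.'' In the paper there is no cascading interaction to control: each processing time is handled independently and the losses simply add, producing the clean $2c$ factor. The real difficulty --- which your proposal does not mention --- is that the charging targets $S_{i,t}$ for different $(i,t)$ are not disjoint (the $Y_{i,t}$ regions overlap across hierarchy levels and across shifts), so the naive sum $\sum_{i,t}|R_{i,t}|\le \tfrac{30}{\lambda}\sum_{i,t}|S_{i,t}|$ does not bound $\sum |R_{i,t}|$ against $|O''|$. Resolving this requires the uncrossing/partitioning argument of Lemma~\ref{lmm:Combine_Subset_All} (sorting by $\alpha$ values, telescoping coefficients, and a contradiction against Lemma~\ref{lmm:XtoY}). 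Without an argument of this kind, your per-interval bound ``at most $\lceil \ell_i/p\rceil$ bad jobs per head interval'' plus the span slack does not yield the $O(\varepsilon)\cdot\opt'$ global bound, because the number of head intervals can be as large as $|L_{i,p}|$ and the charge regions overlap. This is a genuine gap, not a presentation issue.
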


\begin{proof} 
We will prove the following important key lemma in Section \ref{sec:details}. 

\begin{lemma}[\bf Head and tail cutting] \label{lmm:head_tail_reduction}
Consider any fixed processing time $p \in P$. Start with instance $\cI'$ and remove only the head
(or only the tail) part of $span_j$ for all jobs $j \in J$ with $p_j = p \leq \frac{\lvert span_j \rvert}{\lambda}$. 
Then there is a solution for the 
remaining instance with profit at least $(1-\frac{60}{\lambda})\opt'$.
\end{lemma}

Considering Lemma \ref{lmm:head_tail_reduction}, the proof of
Lemma \ref{thm:head_tail_reduction} would be easy. We just need to apply Lemma \ref{lmm:head_tail_reduction} 
for all $c$ many distinct processing
 times $p \in P$ and for both "head" and "tail". Then the total loss for removing all head and tail parts would be 
$\frac{60}{\lambda} \cdot 2c = 120\varepsilon$ fraction:
\begin{equation*}
\opt(\cI'') \geq (1-\frac{60\times 2c}{\lambda}) \opt' \geq (1-120\varepsilon c)\opt'.
\end{equation*}

\end{proof}

The next theorem together with Lemmas \ref{lmm:span_crossing}, \ref{lmm:span_optimum_crossing}, and
\ref{lmm:head_tail_reduction} will help us to complete the proof.

\begin{theorem}\label{theo:main2}
There is a dynamic programming algorithm that finds an optimum solution for instance $\cI''$ 
in time $\varepsilon^{-3}n^{O(\varepsilon^{-6}c)}\log T$.
\end{theorem}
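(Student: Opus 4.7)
The plan is to extend the dynamic program of Theorem~\ref{thm:relatively_big_processing_times} so that it also handles loose jobs in the instance~$\cI''$. The crucial structural gain from Lemma~\ref{thm:head_tail_reduction} is that after cutting heads and tails, the reduced span of a loose job $j \in \cJ_{i'}$ is exactly a union of consecutive intervals from $I_{i'}$; consequently, for every level $i \geq i'$ each interval $a_{i,t}$ is either fully inside $j$'s new span or entirely disjoint from it. Combined with the canonical-solution assumptions (no span-crossing and no position-crossing), this means that the set of loose jobs whose new span ``reaches'' a node $a_{i,t}$ of the decomposition tree is determined by $a_{i,t}$ alone. Hence the decision of which descendant of $a_{i,t}$ hosts each loose job can be made locally, as long as we communicate how many loose jobs of each type are to be processed inside each child.

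Building on this, the table entry is enriched to $A[a_{i,t}, \vec{v}, \vec{d}]$, storing the maximum number of jobs that can be scheduled inside $a_{i,t}$ using $J(a_{i,t})$ together with the loose jobs prescribed by $\vec{d}$ and respecting the blocked regions of $\vec{v}$. The vector $\vec{v}$ is the same $O(q)$-dimensional start/end vector over the slack set $\Psi$ as in Theorem~\ref{thm:relatively_big_processing_times}, recording the positions of external tight jobs passing through $a_{i,t}$. The demand vector $\vec{d}$ specifies, for each of the $c$ distinct processing times and each of the $O(q)$ relevant ancestor classes, how many loose jobs of that type are committed to $a_{i,t}$. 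A key simplification is that Jackson's earliest-deadline-first rule turns the demand into a \emph{count}, not a list of identities; since each coordinate of $\vec{d}$ is bounded by $\ell_i/p \leq \mathrm{poly}(q)$ in the regime where it is nontrivial, the pair $(\vec{v}, \vec{d})$ takes at most $n^{O(qc)}$ values.

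Transitions mirror the earlier DP. To fill $A[a_{i,t}, \vec{v}, \vec{d}]$, we (i) guess a set $\tilde{J}_{i,t} \subseteq \cJ_{i+2} \cap J(a_{i,t})$ of tight jobs to process inside $a_{i,t}$ together with their positions (at most $q^2$ fit, contributing $n^{O(q^2 c)}$ choices); (ii) split $\vec{d}$ into child demands $\vec{d}_1, \dots, \vec{d}_q$ for the $q$ sub-intervals $a_{i+1, t'+1}, \dots, a_{i+1, t'+q}$ at level $I_{i+1}$, together with a choice of which of the newly appearing loose jobs in $\cJ_{i+1}\cap J(a_{i,t})$ go where; (iii) induce the child blocked-space vectors $\vec{v}'_s$ from $\vec{v}$ and from $\tilde{J}_{i,t}$; and take the maximum of $\sum_{s=1}^{q} A[a_{i+1, t'+s}, \vec{v}'_s, \vec{d}_s] + |\tilde{J}_{i,t}| + \text{(new loose jobs placed)}$. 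The recursion stops once $a_{i,t}$ contains only $O(1)$ distinct release times and deadlines, at which point Theorem~\ref{theo:base} is invoked as a black box on that interval with the $O(1)$ blocked subregions encoded by $\vec{v}$. Summing over the $O(nq\log_q T)$ intervals, the $n^{O(qc)}$ choices of $(\vec{v}, \vec{d})$, and the $n^{O(q^2 c)}$ transitions per entry gives a total running time of $\varepsilon^{-3} n^{O(q^3 c)}\log T = \varepsilon^{-3} n^{O(\varepsilon^{-6} c)} \log T$, matching the claim.

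The main obstacle is verifying that coupling sub-problems only through the aggregated demand vector $\vec{d}$ is lossless for canonical solutions to $\cI''$. Concretely, one has to show that, given any canonical optimum schedule of $\cI''$, the counts it induces on each node are consistent demands, and conversely that any set of counts satisfying the feasibility bounds can be realized by an actual schedule. This is exactly where the head/tail-cut alignment property and Jackson's rule are essential: alignment allows loose jobs from $\cJ_{i'}$ to be moved freely among the level-$I_{i'}$ intervals that lie in their new span, and Jackson's rule guarantees that, among the available jobs of a given processing time whose spans contain a sub-interval, the earliest-deadline-first prefix is always schedulable inside that sub-interval whenever any selection of the demanded size is. A secondary technical point is bookkeeping for loose jobs right at their class-defining level $i' = i$, ensuring that each loose job is counted in exactly one demand and never double-counted when descending through the hierarchy.
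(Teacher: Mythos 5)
Your overall architecture (DP table indexed by a blocked-space vector plus an aggregated demand vector, using Jackson's EDF rule to replace job identities with counts) does match the paper's approach. But there is a concrete gap that you flag as "the main obstacle" and then do not resolve, and it is not a routine verification: a count-only demand vector $\vec{d}$ cannot capture those loose jobs whose \emph{position} straddles a boundary between two children of $a_{i,t}$. Concretely, a loose job $j\in\cJ_{i+2}\cap J(a_{i,t})$ with $\ell_{i+2}\leq p_j<\ell_{i+1}$ is, by the no-position-crossing assumption, guaranteed only to lie within one interval of $I_i$, i.e.\ within $a_{i,t}$ itself, but not within a single child at level $I_{i+1}$. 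Assigning such a job to a single child via $\vec{d}$ (and computing child blocked-vectors $\vec{v}'_s$ from $\vec{v}$ and the tight set $\tilde J_{i,t}$ only) therefore cannot reproduce a canonical optimum in which this job is placed across two children. The paper resolves exactly this by also guessing positions of loose jobs with $p_j\geq\ell_{i+3}$ at each node (sets $S_0,S_1$ with $|S_0\cup S_1|\leq q^3$), which forces the blocked vector $\vec{v}$ up to $q^2$ dimensions (not the $O(q)$ dimensions from Theorem~\ref{thm:relatively_big_processing_times} that you reuse) and gives $n^{O(q^3c)}$ guesses per transition. It also deliberately makes these guesses for jobs ``two levels down'' so that the remaining loose jobs (those with $p_j<\ell_{i+3}$, which do fit inside a single level-$I_{i+1}$ child) can be handled purely by counts.

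Two secondary issues in your write-up: (i) the claim that nontrivial coordinates of $\vec{d}$ are bounded by $\ell_i/p\leq\mathrm{poly}(q)$ is not correct in general — loose jobs passed down can be much smaller than $\ell_i/\mathrm{poly}(q)$, and the paper bounds demand entries by $n$, giving the $n^{O(qc)}$ state count you ultimately use; and (ii) your own accounting ($n^{O(qc)}$ states times $n^{O(q^2c)}$ transitions) yields $n^{O(q^2c)}$, not the $n^{O(q^3c)}$ you assert; the extra factor of $q$ in the exponent in the paper comes precisely from the $q^3$ loose/tight jobs whose positions are explicitly guessed per node, which your scheme omits.
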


Before presenting the proof of this theorem we show how this can be used to prove Theorem \ref{theo:main} for $m=1$.

\begin{proof} [Proof of Theorem \ref{theo:main}]
Starting from instance $\cI$ we first reduced it to instance $\cI'$
at a loss of $1-O(\varepsilon)$. Then remove the head and tail part of the span for all the
loose jobs to obtain instance $\cI''$.
Based on Lemma \ref{lmm:head_tail_reduction}, we only loose a factor of $(1-O(\varepsilon c))$ compared to optimum of $\cI'$.
Theorem \ref{theo:main2} shows we can actually find an optimum canonical solution to instance $\cI''$. This solution
will have value at least $(1-O(\varepsilon c))\opt$ using Lemmas \ref{lmm:span_crossing}, \ref{lmm:span_optimum_crossing}, and
\ref{lmm:head_tail_reduction}. To get a $(1-\varepsilon')$-approximation we set $\varepsilon'=\varepsilon/c$ in 
Theorem \ref{theo:main2}.
The run time will be $c^3\eps^{-3}n^{O(\varepsilon'^{-6}c^7)}\log T$.

\end{proof}

Now we prove Theorem \ref{theo:main2}.

\begin{proof}
The idea of the proof is similar to that of Theorem \ref{thm:relatively_big_processing_times}. However, the presence of
``loose'' jobs needs to be handled too. Suppose $j\in\cJ_i$ is a loose job, so $\lambda\ell_i\leq |span_j| < \lambda\ell_{i-1}$ 
and $p_j\leq \frac{span_j}{\lambda}< \ell_{i-1}$. We break these loose jobs into two categories.
For the loose jobs that $p_j < \ell_{i+1}$, because they are not
position-crossing, their position in the final solution will have intersection with at most one interval of $I_{i}$
(and so we can pass them down to lower sub-problems). But for loose jobs where $\ell_{i+1}\leq p_j <\ell_{i-1}$ we need to guess
them (similar to the tight jobs) and we can do the guessing since their size (relative to $\ell_i$) is big.
In order to handle these guesses, we add one more vector to the DP table, and we do the guess for two consecutive levels 
of our decomposition as we go down the DP. 

Suppose $P = \{p_1, p_2, ..., p_c\}$. For each interval $a_{i,t}$ ($0 \leq i \leq k$, $0 \leq t \leq \frac{T}{\ell_i}$), 
$q^2$-dimensional vector $\vec{v}$ (where $0\leq v_i\leq n$), $(qc)$-dimensional vector $\vec{u} = (u_{1,1},\ldots, u_{q,c})$,
where each $u_{\gamma,\sigma}$, $0\leq u_{\gamma,\sigma}\leq n$,
we have an entry in our DP table $A$. Suppose $a_{i,t}$ is partitioned 
into intervals $a_{i+1, t'+1}, ..., a_{i+1, t'+q}$ in $I_{i+1}$.
Entry $A[a_{i,t}, \vec{v}, \vec{u}]$, will store the maximum throughput of a schedule in interval $a_{i,t}$ by
selecting subsets of jobs from the following two collections of jobs:
\begin{itemize}
\item $J(a_{i,t})\cap \cJ_{\geq(i+2)}$
\item $u_{\gamma,\sigma}$ many jobs with processing time $p_{\sigma}$ where $p_{\sigma}<\ell_{i+2}$
whose span is the entire interval $a_{i+1,t'+\gamma}$, for each $1\leq \gamma\leq q$, and $1\leq \sigma\leq c$.
\end{itemize}
by considering the free slots defined by vector $\vec{v}$ (that describes blocked spaces by jobs of higher levels).

Vector $\vec{u}$ is defining the sets of jobs from loose jobs (from higher levels of DP table) whose span was initially much larger
than $\ell_{i+1}$, the guesses we made requires them to be scheduled 
in interval $a_{i+1,t'+\gamma}$ (of length $\ell_{i+1}$) and hence their span is the entire interval $a_{i+1,t'+\gamma}$.
Like before, $\vec{v}$ is defining the portions of the interval which are already used by
 bigger jobs (that are guessed at the higher levels), and for similar reasons as in 
Theorem \ref{thm:relatively_big_processing_times}, we
 only need to consider $\vec{v}$'s of size at most $q^2$ and each job listed in $\vec{v}$ 
will be denoted by its start position and end position (so there is $O(|\Psi|^{2q^2})=n^{O(q^2c)}$ possible values
for $\vec{v}$).

Similar to Theorem \ref{thm:relatively_big_processing_times}, suppose we start at $I_0$.
We guess a subset of tight jobs from $\cJ_0$ to decide on their
schedule. Note that tight jobs will have $p_j\geq \ell_0$.
We also need to guess (and decide on their schedule) those ``loose'' jobs $j\in\cJ_0$ where $p_j\geq \ell_2=T/q^3$
(since their position may cross more than one $I_1$ intervals in the final solution). So we guess a set $S_0\subseteq \cJ_0$
 with $|S_0|\leq q^3$ of jobs $j$ where $p_j\geq\ell_2$ and a feasible schedule for them. 
This will take care of guessing tight and those loose jobs of $\cJ_0$ with $p_j\geq\ell_2$. We need to do similarly for
jobs from $\cJ_1$, i.e. we need to guess a set of tight jobs $j$ 
from $\cJ_1$ (note that for them $p_j\geq\ell_1$) 
and also guess (and decide on their schedule) those ``loose'' jobs $j\in\cJ_1$ with $p_j\geq \ell_2$.
To do so, we guess a set $S_1\subseteq\cJ_1$ of jobs $j$ where $p_j\geq\ell_2=T/q^3$ and a 
feasible schedule for them (given the guesses for $S_0$); 
note that $|S_0\cup S_1|\leq q^3$ (since all of $S_0\cup S_1$ must fit in $[0,T]$).
For each such guess, their schedule projects a vector of blocked
spaces (occupied time of machine). This will be vector $\vec{v}$. The projection of $\vec{v}$ to each interval $a_{0,t}$
will be $\vec{v}_t$ which is the blocked area of $a_{0,t}$. Note that although $\vec{v}$ has up to $q^3$ blocks, 
each $a_{0,t}$ can have at most $q^2$ blocks since each block has size at least $\ell_2=T/q^3$ and each $a_{0,t}$ has
size $\ell_0=T/q$.

For all the other jobs in $\cJ_0\cup \cJ_1$ that have $p_j<\ell_2$, because they are not position-crossing, we can assume
their position (in the final solution) has intersection with only one interval of $I_1$.
For all these jobs of $\cJ_0\cup\cJ_1$, we use the assumption that there is a near
optimum solution in which they are not scheduled in their head or tail. 
So for the jobs in $\cJ_0\cup\cJ_1$ with processing time less than $\ell_2$ we can re-define their 
span to a guessed interval of $I_1$; these guesses
define the $qc$-dimensional vectors $\vec{u}_t$ for each of the $q$ sub-intervals of $a_{0,t}$ at level
$I_1$ (how many loose jobs from $\cJ_0\cup\cJ_1$ with $p_j<\ell_2$
have their span redefined to be one of sub-intervals of $a_{0,t}$).
The final solution will be $\max_{S_0,S_1}\{\sum_t A[a_{0,t},\vec{v}_t,\vec{u}_t]+|S_0\cup S_1|\}$, where the max
is taken over all guesses $S_0\subseteq\cJ_0$, $S_1\subseteq\cJ_1$ and $\vec{u}_t$ as described above.

To bound the size of the table, as argued before, we would have at most $O(knq)$ many intervals in all of $I_0,I_1,\ldots,I_k$. 
For each of them we consider a table entry for at most $n^{O(q^2c)}$ many vectors $\vec{v}$, $n^{O(qc)}$ many vectors $\vec{u}$.
So the total size of the table would be $(kq)n^{O(q^2c)}$.

Like before, the base case is when interval $a_{i,t}$ has $O(1)$ many release times and deadlines. These base cases
$A[a_{i,t},\vec{v},\vec{u}]$ can be solved using Theorem \ref{theo:base} for each vector $\vec{v}$ and $\vec{u}$.

To fill $A[a_{i,t}, \vec{v}, \vec{u}]$ in general (when $0\leq i\leq k$ and $0 \leq t \leq \frac{T}{\ell_i}$ and there
are more than $O(1)$ many release times and deadlines in $a_{i,t}$), suppose $a_{i,t}$ is 
divided into $q$ many equal size intervals $a_{i+1, t'+1}, ..., a_{i+1, t'+q}$ in $I_{i+1}$. 
What we decide at this level is:

\begin{itemize}
\item make a decision for all the jobs $j\in \cJ_{i+2}\cap J(a_{i,t})$; those that are bigger
than $\ell_{i+3}$ will be scheduled or dropped by making a guess; the rest we narrow down their span (guess) 
to be one of the lower level sub-intervals
of $a_{i,t}$ and will be passed down as $\vec{u}'$ to sub-problems below $a_{i,t}$;

\item make a decision for jobs in $\vec{u}$: those that are bigger than $\ell_{i+3}$ will be scheduled or dropped;
the rest we narrow down their span (by a guess) to be one of the lower level sub-intervals of $a_{i,t}$
\end{itemize}

As in the case of $I_0$, we need to guess a set of tight jobs from $\cJ_{i+2}\cap J(a_{i,t})$ and some loose jobs
$j$ with $p_j\geq\ell_{i+3}$ and their positions to be processed in $a_{i,t}$ (considering the blocked areas defined by $\vec{v}$).
Let $S_0$ with $s_0=|S_0|$ be this guessed set. Note that $s_0\leq q^3$ since $p_j \geq \ell_{i+3}=\ell_i/q^3$.
Also for each non-zero $u_{\gamma,\sigma}$ where $p_{\sigma} \geq \ell_{i+3}$ we guess how many of 
those $u_{\gamma,\sigma}$ many jobs should be scheduled and where exactly in $a_{i+1,t'+\gamma}$ (consistent with $\vec{v}$ and $S_0$); 
let $S_1$ be this guessed subset and $|S_1|=s_1$. Note that $s_0+s_1\leq q^3$ and 
there are at most $|\Psi|^{2q^3}$ possible guesses for $S_0$ and $S_1$ together with their positions; thus a total of $n^{O(q^3c)}$ 
possible ways to guess $S_0\cup S_1$ and guess their locations in the schedule.
Then for each possible pair of such guessed sets $S_0,S_1$ we compute the resulting $\vec{v}'$; this defines
the space available for the rest of the jobs in $J(a_{i,t})\cap\cJ_{\geq i+3}$, and those 
defined by $\vec{u}$ where $p_j<\ell_{i+3}$ after blocking the space defined by $\vec{v}$ and the
space occupied by the pair of guessed sets $S_0,S_1$ above. We divide $\vec{v}'$ into 
$q$ many vectors $\vec{v}'_1, ..., \vec{v}'_q$, (as we divided $a_{i,t}$ into $q$ intervals).

We also change $\vec{u}$ to $\vec{u}'$ by setting all the entries of $u_{\gamma,\sigma}$ with $p_{\sigma} \geq \ell_{i+3}$ to zero 
and guess how to distribute $\vec{u}'$ into $q$ many $(qc)$-dimensional vectors $\vec{u}'_1, ..., \vec{u}'_q$ such that 
$\vec{u}'_1 + \vec{u}'_2 + ... + \vec{u}'_q = \vec{u}'$, where $\vec{u}'_\gamma$ is 
describing the number of jobs of different sizes whose span is re-defined to be one of the sub-intervals of $a_{i+1,t'+\gamma}$
at level $I_{i+2}$. The number of ways to break $\vec{u}'$ into $\vec{u}'_1,\ldots,\vec{u}'_q$ is bounded by $n^{O(q^2c)}$.

For all the other jobs in $\cJ_{i+2}\cap J(a_{i,t})$ that have $p_j<\ell_{i+3}$, because they are not position-crossing, we can
 assume
their position (in the final solution) has intersection with only one interval of $I_{i+2}$.
We also use the assumption that there is a near optimum solution in which they are not scheduled in their head or tail. 
So for the jobs in $\cJ_{i+2}\cap J(a_{i,t})$ with processing time less than $\ell_{i+3}$ we can re-define their 
span to a guessed sub-interval of $a_{i+1,t'+\gamma}$ at level $I_{i+2}$; these guesses
define the $qc$-dimensional vectors $\vec{w}_{\gamma}$ for each interval $a_{i+1,t' + \gamma}$ (how many loose jobs from 
$\cJ_{i+2}\cap J(a_{i,t})$ with $p_j<\ell_{i+3}$
have their span redefined to be one of the $q$ sub-intervals of $a_{i+1,t'+\gamma}$ at level $i+2$). 
Observe that, by only knowing how many of $w_{\sigma}$ many jobs with processing times
$p_{\sigma}$ are scheduled in each interval $a_{i+1, t'+1}, ..., a_{i+1, t'+q}$ in the optimum solution, we would be able to detect
which job is in which interval. The reason is that we know for each $p_{\sigma} \in P$, all jobs with processing time $p_{\sigma}$
are scheduled based on earliest deadline first rule, which basically says that at any time when there are two jobs with the same 
processing time available the one with earliest deadline would be scheduled first.

Note that the jobs in $J(a_{i,t})\cap \cJ_{\geq(i+3)}$ all have processing time at most $\ell_{i+3}$ and 
their spans are completely inside one of intervals $a_{i+1, t'+1}, ..., a_{i+1, t'+q}$. 
These jobs will be passed down to the corresponding smaller sub-problems.
So for each given $\vec{v}$ and $\vec{u}$, we consider all guesses $S_0,S_1$ and consider the resulting $\vec{u}',\vec{v}'$
and any possible way of breaking $\vec{u}',$ and $\vec{w}$ into $q$ parts, we check:
%fill $A[a_{i,t},\vec{v},\vec{u}]$ we check the sum of 

\[ A[a_{i+1, t'+1}, \vec{v}'_1, \vec{u}'_1 + \vec{w}_1]+A[a_{i+1, t'+2}, \vec{v}'_2, \vec{u}'_2 + \vec{w}_2] + \ldots + 
A[a_{i+1, t'+q}, \vec{v}'_q, \vec{u}'_q + \vec{w}_q] + s_0 + s_1,\] 

where $s_0,s_1$ are the sizes of the subsets $S_0,S_1$ of jobs with processing time $p_j \geq \ell_{i+3}$ 
guessed from $J(a_{i,t}) \cap \cJ_{i+2}$ and those from $\vec{u}$ with processing time $p_j \geq \ell_{i+3}$. 
We would choose the maximum over all guesses $S_0\subseteq \cJ_{i+2}\cap J(a_{i,t})$, $S_1$, 
and all possible ways to distribute jobs with $p_j < \ell_{i+3}$ to create $\vec{u}'_{\gamma}$ and $\vec{w}'_{\gamma}$ as 
described above.

Note that to fill each entry $A[a_{i,t}, \vec{v}, \vec{u}]$ the number of jobs from $\cJ_{i+2} \cap J(a_{i,t})$ plus jobs from
 $\vec{u}$ with processing time bigger than $\ell_{i+3}$ possible to be processed in $a_{i,t}$ would be at most $q^3$, because
 of their lengths. So we could have at most $n^{O(q^3c)}$ many different $\vec{v}'$ to consider. For $\vec{u}'$ and
 $\vec{w}$ we would have at most $n^{O(q^2c)}$ many ways to distribute each of them into $q$ many $qc$-dimensional vectors.
This means that we can fill the whole table in time at most $\Gamma k q n^{O(q^3c)}=n^{O(\varepsilon^{-6}c)}\log_q T$,
where $\Gamma$ is the running time of the PTAS for Theorem \ref{theo:base}, which is at most 
$2^{\eps^{-1}\log^{-4}(1/\eps)}+{\rm Poly}(n)$. So the total time will be $n^{O(\varepsilon^{-6}c)}\log T$.
\end{proof}

%%%%%%%%%%%%%%%%%%%%%%%%%%%%%%%%%%%%%%%%%%%%%%%%%%%%%%%%%%%%%%
\subsection{Extension to $m=O(1)$ Machines}
We show how to extend the result of Theorem \ref{theo:main} to $m=O(1)$ machines.
We first do the randomized hierarchical decomposition of time line $[0,T]$ and define the classes of jobs $\cJ_0,\cJ_1,\ldots$
as before. Lemmas \ref{lmm:span_crossing} and \ref{lmm:span_optimum_crossing} can be adjusted to show that there is a 
solution with no span-crossing or position-crossing jobs of value at least $(1-O(\varepsilon))\opt$.
Lemma \ref{thm:head_tail_reduction} still holds for each machine. So we only need to explain how to change the DP
for Theorem \ref{theo:main2}. Our dynamic program will be similar, except that 
for each interval $a_{i,t}$ sub-problems are defined based on 
$m$ vectors $\vec{v}^1,\vec{v}^2,\ldots,\vec{v}^m$ corresponding to the blocked areas of the interval over machines $1,\ldots,m$
as well as vector $\vec{u}$.
The sub-problems are stored in entries $A[a_{i,t},\vec{v}^1,\vec{v}^2,\ldots,\vec{v}^m,\vec{u}]$ where each $\vec{v}^{i'}$
is a $q^2$-dimensional vector describing the blocked areas of $a_{i,t}$ on machine $i'$ using jobs from $\cJ_{\leq i+2}$.
Vector $\vec{u}$ as before is a $(qc)$-dimensional vector describing (for each $1\leq \sigma \leq c$) the number of jobs of
size $p_\sigma$ that their span is redefined to one of the $q$ sub-intervals that $a_{i,t}$ will be divided into, on any of the machines.
So the number of sub-problems will be $(kn)n^{(m+c)q^2}$.
At each step of the recursion, to fill in the entry $A[a_{i,t},\vec{v}^1,\vec{v}^2,\ldots,\vec{v}^m,\vec{u}]$ we have to make
similar guesses as before, except that now we have to decide on which of the $m$ machines we schedule them.
For the sets $S_0,S_1$ guessed from tight jobs and loose jobs from $\cJ_{i+2}\cap J(a_{i,t})$, we have $|\Psi|^{2q^3}$ guesses
and for each of guesses another $m$ options to decide the machines. So we will have $n^{O(mq^3c)}$ guesses.
The number of guesses to break $\vec{u}$ to $\vec{u}'_1,\ldots,\vec{u}'_q$ will be the same. The rest of the computation of 
the entry is independent of the machines as we don't schedule any more jobs at this point.
Hence, the total complexity of computing the entries of the DP table will be 
$O(\Gamma\varepsilon^{-2}kn^{O(mcq^3)})=\varepsilon^{-3}n^{O(mc\varepsilon^{-6})}\log T$ (again noting
that $\Gamma$ being the running time of algorithm of Theorem \ref{theo:base}) and we obtain a 
$(1-O(c\varepsilon))$-approximation.
For fixed $m$ and $c$ and for a given $\varepsilon'>0$ one can choose $\varepsilon=\varepsilon'/c$ to obtain a 
$(1-\varepsilon')$-approximation in time $n^{O(mc^7\varepsilon'^{-6})}\log T$.

If all $p_j$'s are bounded polynomially in $n$ then we can also use Theorem \ref{theo:main} to
obtain a bicriteria $(1-\epsilon,1+\epsilon)$ quasi-polynomial time approximation. 
For simplicity consider the case of a single machine ($m=1$). Given $\varepsilon'>0$, we scale the processing times
up to the nearest power of $(1+\varepsilon')$. So we will have $c=O(\log n/\varepsilon')$ many distinct processing times.
We the run the algorithm of Theorem \ref{theo:main} with $\varepsilon=\frac{\varepsilon'}{c}=\frac{\varepsilon'^2}{\log n}$.
This will give a $(1+O(\varepsilon c))$-approximation which we can run on a machine with $(1+\varepsilon')$-speedup to compensate for
the scaled-up processing times (so each scaled job will still finish by its deadline on the faster machine).
Since $\varepsilon c = \frac{\varepsilon'^2}{\log n}\cdot \frac{\log n}{\varepsilon'}=\varepsilon'$, 
we obtain a $(1-\varepsilon')$-approximation
on $(1+\varepsilon')$-speedup machine in time $n^{O(\varepsilon'^{-13}\log^7 n)}$ (as mentioned earlier a stronger form of
this, i.e. for weighted setting was already known \cite{ILMT15}).

%%%%%%%%%%%%%%%%%%%%%%%%%%%%%%%%%%%%%%%%%%%%%%%%%%%%%%%%%%%%%%
\section{Cutting heads and tails: Proof of Lemma \ref{lmm:head_tail_reduction}}\label{sec:details}
We focus on optimum solution $O=\OPT'$ and show how to modify $O$ so that none of the jobs in the modified instance
are scheduled in their head part without much loss in the throughput. For simplicity, we assume that 
$J$ only contains the set of jobs scheduled in $O$.
We basically want to construct another solution $O''$ by changing $O$ such that in $O''$ the position of each loose job
with processing time $p$ has no intersection with its "head" part and at the same time its total profit is 
still comparable to $O$, which allows us to remove "head" part and still have a feasible solution with the 
desired total profit.

For each job $j$, recall that $span_j=[r_j,d_j]$, and if $j\in \cJ_i$ and $span_j$ has intersection with 
$a_{i, t_j}, ..., a_{i, t'_j}$ from $I_i$ then $head_j=span_j\cap a_{i,t_j}$ and $tail_j=span_j\cap a_{i,t'_j}$.
We let $\overline{span}_j=span_j-(head_j\cup tail_j)$ be the reduced span of $j$. 
Our goal is to modify $O$ so that every loose job $j$ is scheduled in $O$ in $\overline{span}_j$.
The idea of the proof is to move each loose job $j$ with processing time $p$ scheduled in its head (or tail) to be re-scheduled in $\overline{span}_j$
if there is empty space for it there. If not, and if we can remove some larger (w.r.t. processing time) jobs in $\overline{span}_j$
to make room for $j$ and possibly other loose jobs whose head is in $a_{i,t_j}$ we do so.
Otherwise, it means that the entire $\lambda$ intervals starting from $a_{i,t_j}$ which $span_j$ has intersection with is
relatively packed with jobs of size $p$ or smaller. We want to argue that in this case even if we remove $j$ (and all other
loose jobs in $a_{i,t_j}$) we can ``charge'' them to the collection of many jobs scheduled in the next $\lambda$ intervals;
hence the loss will be relatively small. 
However, we cannot do this simple charging argument since the intervals to which we charge (for the jobs removed) are not all
disjoint; hence a job that remains might be charged multiple times (due to the hierarchy of the intervals we have defined).
Nevertheless, we show a careful charging scheme that will ensure the total loss for jobs, that cannot be rescheduled in their
reduced span, is still relatively small.

\begin{proof}
Consider $O=\OPT'$ and assume that $J$ is simply the set of jobs in $O$.
We focus on the loose jobs of size $p$ that their position in $O$ has intersection with their ``head''
(argument is similar for the case of ``tail'' we just do the reverse order).
We traverse all the loose jobs of size $p$ in $J$ in the order of their position in $O$ from the latest to the earliest. 
For each such job $j \in J$ assume $j \in \cJ_i$ for some $0 \leq i \leq k$ and $span_j$ has intersection with 
$a_{i, t_j}, ..., a_{i, t'_j}$ from $I_i$. Note that since $j\in \cJ_i$ it means $t'_j-t_j\geq \lambda$.
While traversing $j$ if its position in $O$ has intersection with $head_j$ we add it to set $X_{i, t_j}$
(which is initially empty) corresponding to interval $a_{i, t_j} \in I_i$ and try to move it to $\overline{span}_j$ if possible
 (without changing the position of any other job). This means if there is empty space in $\overline{span_j}$ we try
to re-schedule $j$ there. If this is not possible, then
 temporarily remove it from $O$ (to make room for the rest of the jobs currently running in their head) and add it to set 
$X'_{i, t_j}$ (which is initially empty too).

After changing the position of some loose jobs and removing some others, it is obvious that the position of each 
scheduled loose job of size $p$ has no intersection with its head in the current solution which we denote by $O'$. 
Observe that for each interval 
$a_{i,t} \in I_i$ for $0 \leq i \leq k$ and $0 \leq t \leq \frac{T}{\ell_i}$, we have $X'_{i, t} \subseteq X_{i,t}$ and
 $\lvert X'_{i,t} \rvert = x'_{i,t} \leq \lvert X_{i,t} \rvert = x_{i,t}$. Also if $x'_{i,t} > 0$, then there is no empty 
space for a job with processing time $p$ in the following $\lambda - 1$ intervals of $I_i$, i.e. if we define
 $Y_{i,t} = a_{i, t+1} \cup ... \cup a_{i, t+\lambda-1}$, there is no empty space of size $p$ in $Y_{i,t}$. 
This uses the fact that for any job like $j$ whose head is $a_{i,t}$, its span contains all of $Y_{i,t}$. So $x'_{i,0}>0$ means
there are such jobs of size $p$ (whose head is in $a_{i,t}$) and they could not be moved to any space in $Y_{i,t}$.

Consider interval $a_{i,t}$ for any $0 \leq i \leq k$ and $0 \leq t \leq \frac{T}{\ell_i}$. We
 define $y_{i,t} = \frac{\lvert Y_{i,t} \rvert}{p} = \frac{(\lambda - 1) \cdot \ell_i}{p}$, and $A_{i,t}$ as the set consisting
of all $a_{i',t'}$ such that $Y_{i,t} \cap Y_{i', t'} \neq \emptyset$ and

\begin{itemize}
\item $i'>i$, or
\item $i' = i$ and $t' > t$. 
\end{itemize}

So those in $A_{i,t}$ are the intervals $a_{i',t'}$ whose $Y$ set has overlap with that of $a_{i,t}$ and either $a_{i',t'}$
is at a finer level of hierarchy, or is at the same level $i$ but at a later time.
We then partition $A_{i,t}$ into two sets $A_{i,t}^1$ and $A_{i,t}^2$:
\begin{itemize}
\item if $a_{i',t'} \subseteq a_{i, t}$ then $a_{i',t'} \in A_{i,t}^{1}$, 
\item else $a_{i',t'} \in A_{i,t}^2$.
\end{itemize}

Observe that for each $a_{i', t'} \in A_{i,t}^2$ we have $a_{i', t'} \subseteq Y_{i,t}$ and this means that removing any job from
 $a_{i',t'} \in A_{i,t}^2$ would make an empty room for a job in $X'_{i,t}$.

Next lemma would provide an important fact about intervals whose $Y$ parts are not disjoint and basically provides an upper
 bound on the number of jobs removed temporarily from all intervals in $A_{i,t}$ during the first phase while converting $O$
 to $O'$:

\begin{lemma}\label{lmm:XtoY}
For each $0\leq i \leq k$ and $0 \leq t \leq \frac{T}{\ell_i}$ with $x'_{i,t}>0$:
%\begin{equation*} 
\begin{itemize}
\item $x'_{i,t} + \sum_{a_{i',t'} \in A_{i,t}^1} x'_{i', t'} \leq \frac{3}{\lambda} \cdot y_{i,t}$, 
%\end{equation*} 
%\begin{equation*}
\item $x'_{i,t} + \sum_{a_{i',t'} \in A_{i,t}^2} x'_{i', t'} \leq \frac{3}{\lambda} \cdot y_{i,t}.$
%\end{equation*} 
\end{itemize}
\end{lemma}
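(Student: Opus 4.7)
The plan is to prove both inequalities by a volume-counting argument that exploits the fact that the jobs being summed are size-$p$ jobs from $O$, hence pairwise non-overlapping on the single machine, so a region of length $L$ can contain the start times of at most $\lceil L/p \rceil$ of them. For the first inequality, every $a_{i',t'}\in A_{i,t}^1$ satisfies $a_{i',t'}\subseteq a_{i,t}$, and each $j\in X'_{i',t'}$ starts inside its head, which is contained in $a_{i',t'}$; the same is true for $j\in X'_{i,t}$. So all the jobs being counted have start time in $a_{i,t}$, a region of length $\ell_i$, giving at most $\lfloor \ell_i/p \rfloor+1$ of them. A short case split on whether $p\le \ell_i$ (then $\lfloor \ell_i/p\rfloor+1\le 2\ell_i/p$) or $p>\ell_i$ (in which case $A_{i,t}^1$ must be vacuous, since any loose $j\in\cJ_{i'}$ with $i'>i$ satisfies $p<\ell_{i'-1}\le \ell_i$, so only the term $x'_{i,t}\le 1$ survives) then verifies the bound $(3/\lambda)\,y_{i,t}$ whenever $\lambda\ge 3$.

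For the second inequality the same counting gives only the weak bound $\lambda\ell_i/p$, since now the start times lie in the larger region $a_{i,t}\cup Y_{i,t}$. To save the necessary factor of roughly $\lambda/3$, I invoke the packing hypothesis: since $x'_{i,t}>0$, at the moment we try to reschedule any $j\in X'_{i,t}$ the region $\overline{span}_j\supseteq Y_{i,t}$ contains no empty length-$p$ slot. Because jobs are processed from latest to earliest, every $A^2$ job has already been resolved by this point: the removed ones left behind size-$p$ gaps in $Y_{i,t}$, and the moved ones went strictly rightward. For the gaps not to remain (contradicting the packing), each must have been refilled by a job moved into $Y_{i,t}$ from outside. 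The only candidates are jobs whose head lies in $a_{i,t}$ or in some $a_{i',t'}\in A_{i,t}^1$, because these are the heads whose $\overline{span}$ starts inside $Y_{i,t}$. The first inequality of the lemma, applied to these heads, already controls the total number of such potential in-movers by $(3/\lambda)\,y_{i,t}$. Combining this with the trivial bound of $y_{i,t}$ on the original size-$p$ occupancy of $Y_{i,t}$, together with a size accounting for the non-$p$-sized jobs that remain in place, yields the claimed bound.

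The main obstacle is making the compensation step rigorous. One has to rule out the possibility that a job with head strictly to the left of $a_{i,t}$ moves into $Y_{i,t}$. This is where the hierarchical definition of $A_{i,t}^1$ and $A_{i,t}^2$ through the $Y$-overlap condition is essential: any head whose $\overline{span}$ reaches $Y_{i,t}$ either coincides with $a_{i,t}$ or is classified, by definition, into $A_{i,t}^1$. Once this structural observation is in place, the remainder is routine bookkeeping on sizes and positions, together with the $\lambda\ge 3$ calculation that produces the constant $3$ in the stated bound.
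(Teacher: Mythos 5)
Your proof of the first inequality is essentially the paper's: all the heads whose $X'$ sets are being summed lie inside $a_{i,t}$, the jobs in those $X'$ sets are pairwise-disjoint size-$p$ jobs whose start times (in $O$) fall inside $a_{i,t}$, and $\lambda\geq 3$ turns the resulting $\bigO{\ell_i/p}$ bound into $\tfrac{3}{\lambda}y_{i,t}$. That part is fine (modulo the same looseness about $p>\ell_i$ that the paper itself has).

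The second inequality is where your argument has a real gap. You correctly invoke the packing hypothesis (if $x'_{i,t}>0$ then $Y_{i,t}$ is left with no free $p$-slot) and correctly observe that the $\sum_{a_{i',t'}\in A^2_{i,t}}x'_{i',t'}$ slots freed in $Y_{i,t}$ must have been refilled. But then two things go wrong. First, your characterization of the refillers as ``jobs whose head lies in $a_{i,t}$ or in some $a_{i',t'}\in A^1_{i,t}$, because these are the heads whose $\overline{span}$ starts inside $Y_{i,t}$'' is incorrect on both counts: a head $a_{i',t'}\subsetneq a_{i,t}$ at a finer level has $\overline{span}$ starting strictly inside $a_{i,t}$, not in $Y_{i,t}$, and such a head need not belong to $A^1_{i,t}$ at all (membership in $A^1_{i,t}$ requires $Y_{i',t'}\cap Y_{i,t}\neq\emptyset$, which is a much more restrictive condition). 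The correct candidate set is simply \emph{all} jobs whose positions in $O$ start inside $a_{i,t}$, i.e.\ all jobs with head contained in $a_{i,t}$. Second, and more importantly, you then try to bound the number of in-movers by ``the first inequality of the lemma,'' but the first inequality bounds $x'$ quantities, i.e.\ \emph{removed} jobs; the in-movers are precisely the jobs that were successfully \emph{moved} and are therefore not in any $X'$, so the first inequality tells you nothing about them. The vague closing ``combining this with the trivial bound of $y_{i,t}$ \ldots together with a size accounting'' does not recover the constant $3/\lambda$; even under a charitable reading it would give at best $6/\lambda$. What actually closes the argument — and what is missing in your write-up — is the disjointness observation: the in-movers (at least $\sum_{A^2}x'_{i',t'}$ of them) and the removed jobs of $X'_{i,t}$ are \emph{disjoint} subsets of the \emph{same} collection of at most $\tfrac{\ell_i+p}{p}$ jobs that start inside $a_{i,t}$, so one volume count bounds the sum $x'_{i,t}+\sum_{A^2}x'_{i',t'}$ directly by $\tfrac{\ell_i+p}{p}\leq\tfrac{3}{\lambda}y_{i,t}$, rather than bounding the two pieces separately and adding.
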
 
We defer the proof of this lemma to later.
\begin{corollary}\label{cor:XtoY}
For each $0\leq i \leq k$, and $0 \leq t \leq \frac{T}{\ell_i}$ with $x'_{i,t}>0$:
\label{cor:XtoY}
\begin{equation*} 
x'_{i,t} + \sum_{a_{i',t'} \in A_{i,t}} x'_{i', t'} \leq \frac{6}{\lambda} \cdot y_{i,t} 
\end{equation*} 
\end{corollary}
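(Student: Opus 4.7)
The plan is to derive Corollary \ref{cor:XtoY} as an immediate consequence of Lemma \ref{lmm:XtoY}, by summing the two bounds and using the fact that $A_{i,t}^1$ and $A_{i,t}^2$ form a partition of $A_{i,t}$. The partition property follows directly from the definition: every $a_{i',t'} \in A_{i,t}$ is placed into $A_{i,t}^1$ if $a_{i',t'} \subseteq a_{i,t}$ and into $A_{i,t}^2$ otherwise, and these conditions are mutually exclusive and exhaustive. Hence
\[
\sum_{a_{i',t'} \in A_{i,t}} x'_{i',t'} \;=\; \sum_{a_{i',t'} \in A_{i,t}^1} x'_{i',t'} \;+\; \sum_{a_{i',t'} \in A_{i,t}^2} x'_{i',t'}.
\]

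Next, I would add the two inequalities provided by Lemma \ref{lmm:XtoY}. This yields
\[
2x'_{i,t} + \sum_{a_{i',t'} \in A_{i,t}^1} x'_{i',t'} + \sum_{a_{i',t'} \in A_{i,t}^2} x'_{i',t'} \;\leq\; \frac{6}{\lambda}\cdot y_{i,t},
\]
which by the previous identity rewrites as
\[
2x'_{i,t} + \sum_{a_{i',t'} \in A_{i,t}} x'_{i',t'} \;\leq\; \frac{6}{\lambda}\cdot y_{i,t}.
\]
Since $x'_{i,t} \geq 0$, we may discard one copy of $x'_{i,t}$ on the left to obtain
\[
x'_{i,t} + \sum_{a_{i',t'} \in A_{i,t}} x'_{i',t'} \;\leq\; \frac{6}{\lambda}\cdot y_{i,t},
\]
which is exactly the claimed bound.

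There is essentially no obstacle here: the entire content of the corollary is packaged into Lemma \ref{lmm:XtoY}, and the corollary merely repackages the two separate bounds (one controlling the ``inside'' descendants in $A_{i,t}^1$, one controlling the ``sideways/overlapping'' intervals in $A_{i,t}^2$) into a single uniform statement about all of $A_{i,t}$. The only small thing to verify carefully is the disjointness of $A_{i,t}^1$ and $A_{i,t}^2$, which is immediate from their definitions. The real work, of course, still lies in Lemma \ref{lmm:XtoY} itself, which I would expect to rely on a careful accounting of free space in $Y_{i,t}$ together with the assumption $x'_{i,t}>0$ forcing $Y_{i,t}$ to be tightly packed with jobs of processing time at most $p$.
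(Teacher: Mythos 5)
Your proof is correct and essentially identical to the paper's: both add the two bounds from Lemma \ref{lmm:XtoY}, use that $A_{i,t}^1$ and $A_{i,t}^2$ partition $A_{i,t}$, and discard the extra $x'_{i,t}\geq 0$ term. The only cosmetic difference is that you add the inequalities first and then drop a term, whereas the paper bounds the corollary's left-hand side directly by the sum of the two left-hand sides; the content is the same.
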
 

Next we traverse all intervals on a specific order and change $O'$ to $O''$ so that we can compare its total profit with 
$\OPT'$ while still no scheduled job has intersection with its "head" part. For each $i$ from $0$ to $k$ and for
 each $t$ from $0$ to $\frac{T}{\ell_i}$, if $x'_{i,t} > 0$ do the following:

If (and while) the processing time of the biggest job which is currently scheduled in $Y_{i,t}$ is more than $p$, 
 and $X'_{i,t}$ is not empty yet, remove that
 biggest job from $O'$, add it to set $R_{i,t}$ (which is initially empty) and add as many jobs from $X'_{i,t}$ to $O'$ as possible
 in the empty space which is just freed up by removing that big job. We repeat this as long as $X'_{i,t}\not=\emptyset$
and the size of the biggest job currently scheduled in $Y_{i,t}$ is larger than $p$.
Note that jobs in $X'_{i,t}$ all have processing time $p$ 
and able to be scheduled in whole $Y_{i,t}$ since their span contains $Y_{i,t}$. 
At the end, if $X'_{i,t}\not=\emptyset$ and the processing time of the biggest remaining job in $Y_{i,t}$ is 
no more than $p$ (or in the case it  was initially at most $p$), add all the remaining jobs in $X'_{i,t}$ to $R_{i,t}$, and
define $p'_{i,t}$ as the processing time of the smallest job in $R_{i,t}$ and set $\alpha_{i,t} = \lfloor\frac{p'_{i,t}}{p}\rfloor$.
Note that all the jobs remaining in $Y_{i,t}$ would have processing time at most $p'_{i,t}$.

Now we have our solution $O''$ which we claim has near optimum total profit. 
First observe that no loose job of size $p$ in $O''$ is scheduled having intersection with its head. Also, no job is moved to its head.
Note that for all $0 \leq i \leq k$ and 
$0 \leq t \leq \frac{T}{\ell_i}$, $R_{i,t}$ would contain all the jobs which are actually removed from optimum solution $O$:
\begin{equation*}
O = O'' \cup \bigcup_{i,t} R_{i,t}
\end{equation*}

Let's denote by $S_{i,t}$ the set of jobs scheduled inside $Y_{i,t}$ in solution $O''$ for each $0 \leq i \leq k$ and 
$0 \leq t \leq \frac{T}{\ell_i}$. Then the union of all these sets for all intervals would be a subset of $O''$:
\begin{equation*}
\bigcup_{i,t} S_{i,t} \subseteq O'' \quad \Rightarrow \quad \lvert \bigcup_{i,t} S_{i,t} \rvert \leq \lvert O'' \rvert
\end{equation*}

Our goal is to show that $\lvert \bigcup_{i,t} R_{i,t} \rvert \leq \frac{60}{\lambda} \lvert \bigcup_{i,t} S_{i,t} \rvert$ 
which completes the proof of Lemma \ref{lmm:head_tail_reduction}:
\begin{equation*}
\lvert O'' \rvert = \lvert O \rvert - \lvert \bigcup_{i,t} R_{i,t} \rvert \geq \lvert O \rvert - 
\frac{60}{\lambda} \cdot \lvert \bigcup_{i,t} S_{i,t} \rvert \geq \lvert O \rvert - \frac{60}{\lambda} \cdot \lvert O'' 
\rvert \geq (1-\frac{60}{\lambda}) \lvert O \rvert
\end{equation*}

The next lemma which upper bounds 
%basically provides a relation between the number of jobs removed while traversing $a_{i,t}$: 
$\lvert R_{i,t} \rvert$ by a small fraction of $\lvert S_{i,t} \rvert$ can be proved using the ``simple'' charging scheme
explained at the beginning of this section. We defer the proof of this lemma to later.

\begin{lemma} \label{lmm:RtoS}
For each $0\leq i \leq k$ and $0 \leq t \leq \frac{T}{\ell_i}$ with $x'_{i,t}>0$:
\end{lemma}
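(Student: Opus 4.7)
The plan is to prove a per-interval bound of the form $|R_{i,t}|\leq \frac{C}{\lambda}|S_{i,t}|$ for some absolute constant $C$ (where the value $60$ in the outer proof will emerge from a global charging step) by decomposing $R_{i,t}$ into two pieces and handling each via the structural Corollary \ref{cor:XtoY}. Write $R_{i,t}=R_{i,t}^{L}\cup R_{i,t}^{X}$, where $R_{i,t}^{L}$ consists of the \emph{large} jobs (processing time strictly greater than $p$) removed during the first phase to open room for $X'_{i,t}$, and $R_{i,t}^{X}\subseteq X'_{i,t}$ consists of the \emph{leftover} $p$-sized jobs that were never re-inserted.

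To bound $|R_{i,t}^{L}|$, observe that each removal of a large job of size $p^{*}>p$ opened at least $\lfloor p^{*}/p\rfloor\geq 1$ fresh slots of size $p$, each of which was immediately filled by an $X'_{i,t}$ job. All these inserted jobs now lie in $O''$ inside $Y_{i,t}$ and hence contribute to $S_{i,t}$. A direct counting argument (large jobs tile a disjoint portion of $Y_{i,t}$) yields $|R_{i,t}^{L}|\le x'_{i,t}$, and then Corollary \ref{cor:XtoY} gives $|R_{i,t}^{L}|\le \frac{6}{\lambda}\,y_{i,t}$.

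For $|R_{i,t}^{X}|$, a nonzero leftover can only occur when the removal loop stops because the biggest remaining job in $Y_{i,t}$ has size at most $p$ and the total remaining empty space is strictly less than $p$. Hence $Y_{i,t}$ is densely packed by jobs of size at most $\alpha_{i,t}\cdot p$, which forces $|S_{i,t}|\ge y_{i,t}/\alpha_{i,t}$ up to a lower-order correction. Combined with Corollary \ref{cor:XtoY} this gives $|R_{i,t}^{X}|\le x'_{i,t}\le\frac{6}{\lambda}y_{i,t}\le\frac{6\alpha_{i,t}}{\lambda}|S_{i,t}|$. The factor $\alpha_{i,t}$ is then removed by amortizing against the many jobs of size comparable to $p'_{i,t}$ that remain in $Y_{i,t}$: each such job can be held responsible for only a constant number of leftovers, which yields the clean per-interval bound $|R_{i,t}|\le \frac{C_0}{\lambda}|S_{i,t}|$.

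The final step is to pass from the per-interval bound to the global inequality $|\bigcup_{i,t} R_{i,t}|\le \frac{60}{\lambda}|\bigcup_{i,t} S_{i,t}|$ invoked in the outer proof. The $R_{i,t}$'s are disjoint, so the left-hand side is a genuine sum, but the $S_{i,t}$'s do overlap according to the hierarchy recorded by $A_{i,t}$. The main obstacle I foresee is ensuring that every job in $\bigcup_{i,t} S_{i,t}$ is charged by only an $O(1)$ number of the $(i,t)$ with $x'_{i,t}>0$, rather than paying the depth $\log_q T$ of the random decomposition. Here I expect the partition of $A_{i,t}$ into $A_{i,t}^{1}$ (descendants inside $a_{i,t}$) and $A_{i,t}^{2}$ (strictly later overlapping regions) from Lemma \ref{lmm:XtoY} to supply a charging scheme of bounded multiplicity, so that the local constant $C_0$ blows up by only a small factor en route to the stated $60/\lambda$.
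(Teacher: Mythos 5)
Your decomposition of $R_{i,t}$ into the removed large jobs $R^L_{i,t}$ and the leftover size-$p$ jobs $R^X_{i,t}$ is reasonable, but as written there is a genuine gap: you are left with a dangling factor of $\alpha_{i,t}$, and the closing "amortize against remaining jobs" step is not an argument that removes it. The lower bound $|S_{i,t}|\gtrsim y_{i,t}/\alpha_{i,t}$ already \emph{is} the amortization against the jobs remaining in $Y_{i,t}$; there is no second pass against "jobs of size comparable to $p'_{i,t}$" that yields another $\alpha_{i,t}$ savings. Consequently your chain $|R_{i,t}|\le x'_{i,t}\le\frac{6}{\lambda}y_{i,t}\le\frac{6\alpha_{i,t}}{\lambda}|S_{i,t}|$ never becomes a $\frac{C_0}{\lambda}|S_{i,t}|$ bound with an absolute $C_0$.

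The factor disappears for two structural reasons that your proposal does not invoke. First, your bound $|R^L_{i,t}|\le x'_{i,t}$ (one slot per removal) is too weak: every removed large job has processing time at least $p'_{i,t}$ (we remove in decreasing size order, and $p'_{i,t}$ is the minimum over $R_{i,t}$), so each removal frees room for at least $\alpha_{i,t}=\lfloor p'_{i,t}/p\rfloor$ jobs of $X'_{i,t}$, giving $|R_{i,t}|\le\lceil x'_{i,t}/\alpha_{i,t}\rceil\le 2x'_{i,t}/\alpha_{i,t}$. Then the $\alpha_{i,t}$ in this bound cancels exactly with the one in $|S_{i,t}|\ge y_{i,t}/(5\alpha_{i,t})$, yielding $|R_{i,t}|\le\frac{30}{\lambda}|S_{i,t}|$ once you plug in $x'_{i,t}\le\frac{3}{\lambda}y_{i,t}$ from Lemma~\ref{lmm:XtoY}. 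Second, your two sub-cases degenerate in exactly the way you need: if $R^X_{i,t}\neq\emptyset$ then those leftover size-$p$ jobs are in $R_{i,t}$, which forces $p'_{i,t}=p$ and $\alpha_{i,t}=1$, so there is no $\alpha$-factor in that branch at all; $\alpha_{i,t}>1$ can only happen when $R^X_{i,t}=\emptyset$, and then the improved $\lceil x'_{i,t}/\alpha_{i,t}\rceil$ count applies cleanly. So the right proof is a short case split, not a further amortization. Finally, your last paragraph about passing to the global inequality $|\bigcup R_{i,t}|\le\frac{60}{\lambda}|\bigcup S_{i,t}|$ is out of scope for this lemma: that uncrossing is Lemma~\ref{lmm:Combine_Subset_All} and the subsequent grouping argument, proved via a telescoping multiplier scheme over the sorted $\alpha$ values rather than a bounded-multiplicity charging; here you are only asked for the per-interval bound $|R_{i,t}|\le\frac{30}{\lambda}|S_{i,t}|$.
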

\begin{equation*}
\lvert R_{i,t} \rvert \leq \frac{30}{\lambda} \lvert S_{i,t} \rvert 
\end{equation*}

This means that the number of jobs removed from $O$ for each interval $a_{i,t}$ (namely $\lvert R_{i,t} \rvert$), is at most 
$\frac{30}{\lambda}$ of the number of jobs scheduled in interval $Y_{i,t}$ (namely $\lvert S_{i,t} \rvert$). 
If it was the case that for any 
 two intervals $a_{i_1,t_1}$ and $a_{i_2, t_2}$, we have $S_{i_1, t_1} \cap S_{i_2, t_2} = \emptyset$,
 then Lemma \ref{lmm:RtoS} would be enough to complete the proof of Lemma \ref{lmm:head_tail_reduction}.
But the problem is that for any two different intervals $a_{i_1,t_1}$ and $a_{i_2,t_2}$, by definition, $R_{i_1, t_1}$ and
 $R_{i_2, t_2}$ are disjoint but $S_{i_1, t_1}$ and $S_{i_2, t_2}$ could have intersection. In other words we might have
 some intervals $a_{i_1, t_1}, a_{i_2, t_2}$ with $Y_{i_1, t_1} \cap Y_{i_2, t_2} \neq \emptyset$ which means 
$S_{i_1, t_1} \cap S_{i_2, t_2} \neq \emptyset$. The next lemma will help us to ``uncross'' those $Y$'s:

\begin{lemma}\label{lmm:Combine_Subset_All}
For each interval $a_{i, t}$ with $x'_{i,t}>0$, we can partition $A_{i,t}$ into two parts $A_1$ and $A_2$ such that
\begin{equation*}
\lvert R_{i,t} \cup \bigcup_{a_{i',t'} \in A_1} R_{i', t'} \rvert \leq \frac{60}{\lambda} \lvert S_{i,t} 
\setminus \bigcup_{a_{i', t'} \in A_2} S_{i', t'} \rvert
\end{equation*}
\end{lemma}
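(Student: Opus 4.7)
My plan is to use the partition from Lemma \ref{lmm:XtoY}, setting $A_1 := A^1_{i,t}$ and $A_2 := A^2_{i,t}$. Since the sets $R_{i',t'}$ produced by the construction of $O''$ are pairwise disjoint (each removed job is assigned to exactly one $R$-set), the left-hand side decomposes additively as $|R_{i,t}| + \sum_{(i',t')\in A_1}|R_{i',t'}|$, and applying Lemma \ref{lmm:RtoS} termwise gives the upper bound $\frac{30}{\lambda}\bigl(|S_{i,t}| + \sum_{A_1}|S_{i',t'}|\bigr)$. It then remains to show that $|S_{i,t}| + \sum_{A_1}|S_{i',t'}| \leq 2\,|S^*|$, where $S^* := S_{i,t} \setminus \bigcup_{A_2} S_{i',t'}$.

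The choice $A_2 = A^2_{i,t}$ is natural because every $(i',t') \in A^2_{i,t}$ satisfies $a_{i',t'} \subseteq Y_{i,t}$, so subtracting $\bigcup_{A_2} S_{i',t'}$ removes exactly the jobs of $S_{i,t}$ that live in finer intervals nested inside $Y_{i,t}$; the residual $S^*$ consists of jobs in $Y_{i,t}$ not covered at any finer level of $A^2_{i,t}$. To control the sum $\sum_{A_1}|S_{i',t'}|$, I would set up a charging scheme in which each job of $S^*$ absorbs only a constant number of charges from the left-hand side. The key lever is that intervals in $A^1_{i,t}$ sit at finer levels inside $a_{i,t}$ (hence outside $Y_{i,t}$), so their $Y_{i',t'}$-sets spill forward into $Y_{i,t}$; combined with the bound $\frac{6}{\lambda}y_{i,t}$ on the total $x'$-mass from Corollary \ref{cor:XtoY}, and the $\alpha_{i',t'} = \lfloor p'_{i',t'}/p\rfloor$ scaling in the construction of each $R_{i',t'}$ (which relates a removed job of size $p'_{i',t'}$ to $\alpha_{i',t'}$ many jobs of size $p$ that remain in $Y_{i',t'}$), this should translate into a bound on $\sum_{A_1}|R_{i',t'}|$ written directly in terms of $|S^*|$.

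The main obstacle is the overlap of the $Y$-sets across levels: $A^1_{i,t}$ may contain deeply nested intervals whose $Y$-sets pairwise overlap, so a naive union bound on $\sum_{A_1}|S_{i',t'}|$ blows up with depth. I plan to deal with this by grouping the $A^1_{i,t}$-intervals by level $i'$ and processing level by level, using that within a single level $i'$ the intervals of $I_{i'}$ lying inside $a_{i,t}$ are pairwise disjoint and their $S$-contributions project into a common range bounded by $|Y_{i,t}|$, and then telescoping across levels. The slack between the coefficient $30/\lambda$ in Lemma \ref{lmm:RtoS} and the target $60/\lambda$ gives precisely the factor of two needed to absorb the telescoping loss. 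If this rigid partition turns out to be too coarse, the fallback is to refine it greedily: scan intervals of $A_{i,t}$ in order of decreasing $i'$ (finest first) and place each into $A_1$ only when its $R$-jobs can be charged to as-yet-unclaimed jobs of $S^*$, pushing everything else into $A_2$; the stopping condition of this greedy process would again be quantified using Corollary \ref{cor:XtoY}.
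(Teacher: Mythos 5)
Your proposal has a genuine gap. The fixed partition $A_1 = A^1_{i,t}$, $A_2 = A^2_{i,t}$ is not the one the paper constructs, and the intermediate inequality you target, $|S_{i,t}| + \sum_{A_1}|S_{i',t'}| \le 2|S^*|$, is false in general: the set $S^* = S_{i,t} \setminus \bigcup_{A_2} S_{i',t'}$ can be almost empty (if $A^2_{i,t}$ covers nearly all of $Y_{i,t}$ at finer levels), while $|S_{i,t}|$ stays large, so there is no constant-factor relationship of the form you need. Moreover the ``telescoping across levels'' plan cannot be absorbed by a factor of $2$: the intervals of $A^1_{i,t}$ sit at up to $\log_q T$ distinct levels, each with disjoint intervals inside $a_{i,t}$, so a level-by-level union bound multiplies by an unbounded number of levels, not a constant. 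The greedy fallback (scan by decreasing $i'$) also does not obviously terminate with the right partition, because the quantity that actually governs removal efficiency is not the level $i'$ but the $\alpha_{i',t'}$ values.

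The ideas you are missing are precisely these: the paper sorts $A_{i,t}$ by the values $\alpha_{i',t'}$ in descending order, then argues the partition is a threshold cut $A_1 = \{a_{i_1,t_1},\ldots,a_{i_s,t_s}\}$, $A_2 = \{a_{i_{s+1},t_{s+1}},\ldots,a_{i_r,t_r}\}$ for some index $s \ge h$ (where $h$ is the last index with $\alpha_{i_h,t_h} \ge \alpha_{i,t}$). To find $s$, the paper defines a \emph{disjoint} refinement $Y^*_{i,t}, Y^*_{i_{h+1},t_{h+1}},\ldots,Y^*_{i_r,t_r}$ that partitions $Y_{i,t}$, writes the per-$s$ inequality (\ref{XtoY_All_S}) in terms of these $Y^*$'s, and shows by contradiction that it must hold for some $s$: assume it fails for all $s$, multiply the $s$-th failed inequality by the nonnegative weight $\alpha_{i_s,t_s}-\alpha_{i_{s+1},t_{s+1}}$, and sum; the telescoping of coefficients produces exactly $x'_{i,t}+\sum_u x'_{i_u,t_u} > \frac{6}{\lambda}|Y_{i,t}|/p$, contradicting Corollary \ref{cor:XtoY}. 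None of this machinery (the $\alpha$-sorting, the $Y^*$-disjointification, the averaging with telescoping coefficients) appears in your proposal, and replacing it with the containment-based $A^1/A^2$ split and a generic charging scheme does not give a proof.
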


Using Lemma \ref{lmm:Combine_Subset_All} we can partition all intervals into a number of disjoint groups such that for each
group the number of total jobs removed from $O$ is a $\frac{60}{\lambda}$ fraction of the number of jobs scheduled 
in $O''$ in that group. 
% and there is no intersection between their $S$ sets, which completes the proof of Lemma \ref{lmm:head_tail_reduction}.

Suppose $a_{i,t}$ is an interval with the lowest $i$ value (breaking the ties with equal $i$ by taking the smallest $t$) with $x'_{i,t}>0$.
Using Lemma \ref{lmm:Combine_Subset_All} we find some $A_1 \subseteq A_{i,t}$ and the first group $G_1$ 
 of intervals we define will be $G_1=\{a_{i,t}\} \cup A_1$. 
If we denote $R(G_1)= R_{i,t} \cup \bigcup_{a_{i',t'} \in A_1} R_{i', t'}$ and 
$S(G_1)= S_{i,t} \setminus \bigcup_{a_{i', t'} \in A_2} S_{i', t'}$ then using Lemma \ref{lmm:Combine_Subset_All}:
$|R(G_1)|\leq \frac{60}{\lambda}|S(G_1)|$.
Also $S(G_1)\cap S_{i',t'} = \emptyset$ for any
 $a_{i', t'} \notin A_1 \cup \{a_{i,t}\}$ for the following reason: if $a_{i',t'}\in A_2$ then clearly 
$S(G_1)\cap S_{i',t'}=\emptyset$ from definition of $S(G_1)$; if $a_{i',t'}\not \in A_1\cup A_2$ then $Y_{i',t'}$ has no intersection
with $Y_{i,t}$ and hence $S(G_1)\cap S_{i',t'}=\emptyset$.
Note that if $A_{i,t} = \emptyset$, then we can use Lemma \ref{lmm:RtoS}, we have $G_1 = \{a_{i,t}\}$ and 
$|R(G_1)|\leq \frac{60}{\lambda}|S(G_1)|$, holds for this case too.

So we can remove group $G_1$ along with the corresponding sets $R(G_1)$ and $S(G_1)$ and continue doing the same for the remaining
 intervals to construct the next group. Observe that at each step by removing a group of intervals, the 
remaining intervals are not changed and 
this allows us to be able to do the same process for them.
Finally we obtain a collection of groups $G_1,G_2,\ldots$ where for each $G_i$: $|R(G_i)|\leq \frac{60}{\lambda}|S(G_i)|$ 
and the sets $S(G_i)$'s are disjoint. Since $\bigcup_i S(G_i)$ is a subset of all jobs scheduled in $O''$ and $\bigcup_i R(G_i)$
is the set of all jobs removed from $O$ to obtain $O''$, the proof of Lemma \ref{lmm:head_tail_reduction} follows.

\end{proof}

%%%%%%%%%%%%%%%%%%%%%%%%%%%%%%%%%%%%%%%%%%%%%%%%%%%%%%%%%%%%%%
\subsection{Proof of Lemma \ref{lmm:XtoY}}
\begin{proof}
Recall that the first step of converting $O$ to $O'$ was to traverse all the scheduled jobs based on their position in
$O$. To prove the first statement of Lemma \ref{lmm:XtoY}, note that all the jobs removed while traversing
 $a_{i,t}$ and $A_{i,t}^1$, have processing time $p$ and had initially intersection with interval $a_{i,t}$ with length
 $\ell_i$ in $O$. Observe that their length is $p$ and so all could be scheduled in an interval with length $\ell_i+p$. Assuming $\lambda > 3$ we have:
\begin{equation} 
x'_{i,t} + \sum_{a_{i',t'} \in A_{i,t}^1} x_{i', t'} \leq \frac{\ell_i+p}{p} \leq \frac{2\ell_i}{p} \leq 
\frac{2\ell_i}{p} \cdot \frac{3(\lambda-1)}{2\lambda} = \frac{3}{\lambda} \cdot y_{i,t}. \label{eq:XtoY_A1}
\end{equation}

To prove the second statement, observe that while traversing the jobs in $A_{i,t}^2$ we have temporarily removed $\sum_{a_{i',t'}
 \in A_{i,t}^2} x'_{i', t'}$ many jobs with processing time $p$ and they make room for the same number
 of jobs (of size $p$) in $a_{i,t}$.
Note that all $x'_{i,t}$ many jobs which are temporarily removed while traversing $a_{i,t}$ could be scheduled in 
the whole interval $Y_{i,t}$ (as their span contains $Y_{i,t}$). 
So from at most $\frac{\ell_i+p}{p}$ many jobs initially intersecting with interval
$a_{i,t}$, at most $\frac{\ell_i+p}{p} - \sum_{a_{i',t'} \in A_{i,t}^2} x_{i', t'}$ many of them would be
temporarily removed while traversing $a_{i,t}$:
\begin{equation}
x'_{i,t} + \sum_{a_{i',t'} \in A_{i,t}^2} x_{i', t'} \leq \frac{\ell_i+p}{p} \leq 
\frac{3}{\lambda} \cdot y_{i,t}. \label{eq:XtoY_A2}
\end{equation}

We only need to sum up inequalities (\ref{eq:XtoY_A1}) and (\ref{eq:XtoY_A2}) to prove Corollary \ref{cor:XtoY}:
\begin{equation*}
x'_{i,t} + \sum_{a_{i',t'} \in A_{i,t}} x'_{i', t'} \leq \big( x'_{i,t} + 
\sum_{a_{i',t'} \in A_{i,t}^1} x_{i', t'} \big) + \big( x'_{i,t} + \sum_{a_{i',t'} \in A_{i,t}^2} x_{i', t'} \big)
 \leq \frac{6}{\lambda} \cdot y_{i,t}
\end{equation*}
\end{proof}

%%%%%%%%%%%%%%%%%%%%%%%%%%%%%%%%%%%%%%%%%%%%%%%%%%%%%%%%%%%%%%%%%%%%%%
\subsection{Proof of Lemma \ref{lmm:RtoS}}
\begin{proof}
Fix some $0 \leq i \leq k$ and $0 \leq t \leq \frac{T}{\ell_i}$. First observe that for each interval $a_{i,t}$ with 
positive $x'_{i,t}$, we have removed at most $\lceil \frac{x'_{i,t}}{\alpha_{i,t}} \rceil$ many jobs from $O$ to obtain $O''$.
This is obvious if $\alpha=1$. For $\alpha>1$ note that for each job bigger than $p$ removed from $Y_{i,t}$ we could schedule
at least $\alpha_{i,t}$ many jobs of size $p$.

\begin{equation}
\lvert R_{i,t} \rvert \leq \lceil \frac{x'_{i,t}}{\alpha_{i,t}} \rceil \leq \frac{2x'_{i,t}}{\alpha_{i,t}} \label{eq:RtoX} 
\end{equation}

Also note that the length of $Y_{i,t}$ is $(\lambda - 1) \cdot \ell_i$ and all the jobs inside $Y_{i,t}$ in solution $O''$ 
have processing time at most $p'_{i,t} \leq 2 p\alpha_{i,t}$ and between any two consecutive scheduled job there can be at most
 $p'_{i,t}$ empty space. So the time between the starting time of each two consecutive scheduled job in $Y_{i,t}$ could not
 be more than $2p'_{i,t}$.
\begin{equation}
\lvert S_{i,t} \rvert \geq \lfloor \frac{(\lambda-1) \ell_i}{4p\alpha_{i,t}} \rfloor \geq \frac{y_{i,t}}{5\alpha_{i,t}} 
\label{eq:StoY}
\end{equation}

Considering Lemma \ref{lmm:XtoY} we have: 
\begin{equation} \label{eq5}
x'_{i,t}\leq \frac{3}{\lambda} \cdot y_{i,t} 
\end{equation} 

To complete the proof of Lemma \ref{lmm:RtoS} we only need to combine Inequalities (\ref{eq:RtoX}), (\ref{eq:StoY}), and (\ref{eq5}):
\begin{equation*}
\lvert R_{i,t} \rvert \leq \frac{2x'_{i,t}}{\alpha_{i,t}} \leq \frac{2}{\alpha_{i,t}} \cdot \frac{3y_{i,t}}{\lambda} = 
\frac{30}{\lambda} \cdot \frac{y_{i,t}}{5\alpha_{i,t}} \leq \frac{30}{\lambda} \lvert S_{i,t} \rvert
\end{equation*} 
\end{proof}

%%%%%%%%%%%%%%%%%%%%%%%%%%%%%%%%%%%%%%%%%%%%%%%%%%%%%%%%%%%%%%%%%%%%%%
\subsection{Proof of Lemma \ref{lmm:Combine_Subset_All}} 
\begin{proof} 
Fix some $0 \leq i \leq k$ and $0 \leq t \leq \frac{T}{\ell_i}$ and suppose we have sorted all intervals $a_{i', t'} 
\in A_{i,t}$ based on their $\alpha_{i', t'}$ values (in descending order) and for simplicity rename them so that
 $A_{i,t} = \{a_{i_1,t_1}, a_{i_2,t_2}, \ldots, a_{i_r,t_r}\}$ where 
$\alpha_{i_1,t_1} \geq \alpha_{i_2,t_2} \geq \ldots \geq \alpha_{i_r,t_r}$.

Suppose $h$ is the highest index where $\alpha_{i_h, t_h} \geq \alpha_{i,t}$ ($h=0$ if there is no such index).
We claim that there is an index $s$, $h \leq s \leq r$, such that the statement of Lemma \ref{lmm:Combine_Subset_All}
 holds for $A_1 = \{a_{i_1, t_1}, \ldots, a_{i_s, t_s}\}$ and $A_2 = \{a_{i_{s+1}, t_{s+1}}, \ldots, a_{i_r, t_r}\}$.
By way of contradiction suppose that the statement of Lemma \ref{lmm:Combine_Subset_All} 
is not valid for any $s$, $h \leq s \leq r$. Thus:

\begin{equation} 
\lvert R_{i,t} \cup \bigcup_{u = 1}^s R_{i_u, t_u} \rvert > \frac{60}{\lambda} \lvert S_{i,t} \setminus 
\bigcup_{u = s+1}^r S_{i_u, t_u} \rvert \label{eq:RtoS_All_S}
\end{equation} 

Also based on Lemma \ref{lmm:XtoY} we have: 

\begin{equation}
x'_{i,t} + \sum_{u = 1}^r x'_{i_u, t_u} \leq \frac{6}{\lambda} \cdot y_{i,t} = \frac{6}{\lambda} \cdot 
\frac{\lvert Y_{i, t} \rvert}{p} \label{eq:XtoY}
\end{equation} 

We are going to show that we cannot have Inequalities (\ref{eq:XtoY}) and (\ref{eq:RtoS_All_S}) for all $s$, $h \leq s \leq r$ 
at the same time and reach a contradiction. First of all to find an upper bound for the left side of Inequality 
(\ref{eq:RtoS_All_S}), observe that, by definition, for any two intervals $a_{i,t}$ and $a_{i', t'}$ there is no intersection
 between $R_{i,t}$ and $R_{i', t'}$. By using Inequality (\ref{eq:RtoX}), for each $s$, $h \leq s \leq r$ we have:
\begin{equation}
\lvert R_{i,t} \cup \bigcup_{u = 1}^s R_{i_u, t_u} \rvert = \lvert R_{i,t} \rvert + \sum_{u = 1}^s \lvert R_{i_u, t_u} 
\rvert \leq \frac{2x'_{i,t}}{\alpha_{i,t}} + \sum_{u = 1}^s \frac{2x'_{i_u,t_u}}{\alpha_{i_u,t_u}} \label{RtoX_All_S}
\end{equation}

To have a lower bound for the right side of Inequality (\ref{eq:RtoS_All_S}) we are going to define $Y_{i_u, t_u}^*$ 
for each $u$, $h < u \leq r$ and $Y_{i,t}^*$:
\begin{equation*} 
Y_{i_r, t_r}^* = Y_{i_r, t_r} \cap Y_{i,t} 
\end{equation*} 
\begin{equation*} 
h < u < r \quad \Rightarrow \quad Y_{i_u, t_u}^* = \big( Y_{i_u, t_u} \cap Y_{i,t} \big) \setminus 
\big( Y_{i_{u+1}, t_{u+1}}^* \cup \ldots \cup Y_{i_r, t_r}^* \big)
\end{equation*} 
\begin{equation*} 
Y_{i, t}^* = Y_{i,t} \setminus \big( Y_{i_{h+1}, t_{h+1}}^* \cup \ldots \cup Y_{i_r, t_r}^* \big) 
\end{equation*} 

Note that $Y_{i,t}^*$ along with all $Y_{i_u, t_u}^*$'s are a partition of $Y_{i,t}$: 
\begin{equation} 
\lvert Y_{i, t} \rvert = \lvert Y_{i, t}^* \rvert + \sum_{u = h+1}^r \lvert Y_{i_u, t_u}^* \rvert \label{eq:Y_it_partition} 
\end{equation} 

Also note that for each $u$, $h < u \leq r$ jobs scheduled inside $Y_{i_u, t_u}^*$ in $O''$ have processing time at most
 $p'_{i_u, t_u} \leq 2p \alpha_{i_u, t_u}$ and the empty space between any two consecutive scheduled job is no more than $p'_{i_u, t_u}$ too 
(otherwise we were able to add some more jobs from $X_{i,t}^p$ to $O''$), and jobs scheduled inside $Y_{i,t}^*$ have processing time at most
 $p'_{i,t} \leq 2p \alpha_{i,t}$. So for each $s$, $h \leq s \leq r$ we have:
\begin{equation} 
\lvert S_{i,t} \setminus \bigcup_{u = s+1}^r S_{i_u, t_u} \rvert
\geq \lfloor \frac{\lvert Y_{i,t}^* \rvert}{4p\alpha_{i,t}} \rfloor + \sum_{u = h+1}^s \lfloor \frac{\lvert Y_{i_u,t_u}^* \rvert}{4p\alpha_{i_u,t_u}} \rfloor
\geq \frac{\lvert Y_{i,t}^* \rvert}{5p\alpha_{i,t}} + \sum_{u = h+1}^s \frac{\lvert Y_{i_u,t_u}^* \rvert}{5p\alpha_{i_u,t_u}} \label{StoY_All_S}
\end{equation} 

The only thing we need to prove to complete the proof of the Lemma \ref{lmm:Combine_Subset_All} is that there is an
 index $s$, $h \leq s \leq r$ such that:
\begin{equation} 
\frac{x'_{i,t}}{\alpha_{i,t}} + \sum_{u = 1}^s \frac{x'_{i_u,t_u}}{\alpha_{i_u,t_u}} \leq \frac{6}{\lambda}
 \bigg( \frac{\lvert Y_{i,t}^* \rvert}{p\alpha_{i,t}} + \sum_{u = h+1}^s \frac{\lvert Y_{i_u,t_u}^* \rvert}{p\alpha_{i_u,t_u}}
 \bigg) \label{XtoY_All_S}
\end{equation} 

Combining Inequalities (\ref{RtoX_All_S}), (\ref{StoY_All_S}), and (\ref{XtoY_All_S}) completes the proof: 
\begin{equation*} 
\lvert R_{i,t} \cup \bigcup_{u = 1}^s R_{i_u, t_u} \rvert \leq \frac{2x'_{i,t}}{\alpha_{i,t}} + 
\sum_{u = 1}^s \frac{2x'_{i_u,t_u}}{\alpha_{i_u,t_u}}
\end{equation*} 
\begin{equation*} 
\leq \frac{12}{\lambda} \bigg( \frac{\lvert Y_{i,t}^* \rvert}{p\alpha_{i,t}} + 
\sum_{u = h+1}^s \frac{\lvert Y_{i_u,t_u}^* \rvert}{p\alpha_{i_u,t_u}} \bigg) \leq \frac{60}{\lambda} \lvert S_{i,t} 
\setminus \bigcup_{u = s+1}^r S_{i_u, t_u} \rvert
\end{equation*} 

Thus, we now prove Inequality (\ref{XtoY_All_S}).
We consider two cases. For the first case suppose that $h=r$, which means that $\alpha_{i,t} \leq \alpha_{i_u, t_u}$ 
for all $1 \leq u \leq r$.
Note that in this case $Y_{i,t}^* = Y_{i,t}$ and Inequality (\ref{XtoY_All_S}) would be proved using the inequality (\ref{eq:XtoY}):
\begin{equation} 
\frac{x'_{i,t}}{\alpha_{i,t}} + \sum_{u = 1}^s \frac{x'_{i_u,t_u}}{\alpha_{i_u,t_u}} \leq 
\frac{1}{\alpha_{i,t}} \big(x'_{i,t} + \sum_{u = 1}^s x'_{i_u,t_u} \big) \leq
\frac{6}{\lambda} \frac{\lvert Y_{i,t}^* \rvert}{p\alpha_{i,t}}
\end{equation} 

Hence we suppose $h < r$ and for the sake of contradiction suppose that Inequality (\ref{XtoY_All_S}) is not true
for any value of $s$. So for all $s$, $h \leq s \leq r$ we have: 
\begin{equation} 
\frac{x'_{i,t}}{\alpha_{i,t}} + \sum_{u = 1}^s \frac{x'_{i_u,t_u}}{\alpha_{i_u,t_u}} > \frac{6}{\lambda}
 \bigg( \frac{\lvert Y_{i,t}^* \rvert}{p\alpha_{i,t}} + \sum_{u = h+1}^s \frac{\lvert Y_{i_u,t_u}^* \rvert}{p\alpha_{i_u,t_u}}
 \bigg) \label{XtoY_All_S_Contradiction}
\end{equation} 

What we do is, for each value of $s$, $h\leq s\leq r$, we multiply both sides of Inequality (\ref{XtoY_All_S_Contradiction})
and sum all of them to derive a contradiction.
For $s=h$,
 multiply both sides of Inequality (\ref{XtoY_All_S_Contradiction}) 
by $\alpha_{i,t} - \alpha_{i_{h+1}, t_{h+1}}$ and for
$s = r$ multiply both sides by $\alpha_{i_r, t_r}$, and 
for every other $s$, $h < s < r$ multiply both sides of Inequality
(\ref{XtoY_All_S_Contradiction}) associated with $s$ by $\alpha_{i_s, t_s} - \alpha_{i_{s+1}, t_{s+1}}$. 
Note that considering the definition of $h$ and the fact that $h < r$, we
 have $\alpha_{i,t} > \alpha_{i_{h+1}, t_{h+1}} \geq \ldots \geq \alpha_{i_r, t_r} \geq 1$, so all the coefficients are non-negative
(and in fact the first one is positive):

\begin{equation*} 
(\alpha_{i,t} - \alpha_{i_{h+1}, t_{h+1}}) \quad \times \quad \Bigg(\frac{x'_{i,t}}{\alpha_{i,t}} + 
\frac{x'_{i_1,t_1}}{\alpha_{i_1,t_1}} + \frac{x'_{i_2,t_2}}{\alpha_{i_2,t_2}} + \ldots + \frac{x'_{i_h,t_h}}{\alpha_{i_h,t_h}} > 
\frac{6}{\lambda} \bigg( \frac{\lvert Y_{i,t}^* \rvert}{p\alpha_{i,t}} \bigg) \Bigg)
\end{equation*} 
\begin{equation*} 
(\alpha_{i_{h+1},t_{h+1}} - \alpha_{i_{h+2}, t_{h+2}}) \quad \times \quad \Bigg( \frac{x'_{i,t}}{\alpha_{i,t}} + 
\frac{x'_{i_1,t_1}}{\alpha_{i_1,t_1}} + \ldots + \frac{x'_{i_{h+1},t_{h+1}}}{\alpha_{i_{h+1},t_{h+1}}} > \frac{6}{\lambda}
 \bigg( \frac{\lvert Y_{i,t}^* \rvert}{p\alpha_{i,t}} + \frac{\lvert Y_{i_{h+1},t_{h+1}}^* \rvert}{p\alpha_{i_{h+1},t_{h+1}}}
 \bigg) \Bigg)
\end{equation*} 
\begin{equation*} 
(\alpha_{i_{h+2},t_{h+2}} - \alpha_{i_{h+3}, t_{h+3}}) \quad \times \quad \Bigg( \frac{x'_{i,t}}{\alpha_{i,t}} +
 \frac{x'_{i_1,t_1}}{\alpha_{i_1,t_1}} + \ldots + \frac{x'_{i_{h+2},t_{h+2}}}{\alpha_{i_{h+2},t_{h+2}}} >
 \frac{6}{\lambda} \bigg( \frac{\lvert Y_{i,t}^* \rvert}{p\alpha_{i,t}} +
 \frac{\lvert Y_{i_{h+1},t_{h+1}}^* \rvert}{p\alpha_{i_{h+1},t_{h+1}}} + 
\frac{\lvert Y_{i_{h+2},t_{h+2}}^* \rvert}{p\alpha_{i_{h+2},t_{h+2}}} \bigg) \Bigg)
\end{equation*} 
\begin{equation*} 
\vdots
\end{equation*} 
\begin{equation*} 
(\alpha_{i_s,t_s} - \alpha_{i_{s+1}, t_{s+1}}) \quad \times \quad \Bigg( \frac{x'_{i,t}}{\alpha_{i,t}} +
 \frac{x'_{i_1,t_1}}{\alpha_{i_1,t_1}} + \ldots + \frac{x'_{i_s,t_s}}{\alpha_{i_s,t_s}} > \frac{6}{\lambda} \bigg( \frac{\lvert Y_{i,t}^* \rvert}{p\alpha_{i,t}} + \frac{\lvert Y_{i_{h+1},t_{h+1}}^* \rvert}{p\alpha_{i_{h+1},t_{h+1}}} + \ldots + \frac{\lvert Y_{i_s,t_s}^* \rvert}{p\alpha_{i_s,t_s}} \bigg) \Bigg)
\end{equation*} 
\begin{equation*} 
\vdots
\end{equation*} 
\begin{equation*} 
(\alpha_{i_{r-1},t_{r-1}} - \alpha_{i_r, t_r}) \quad \times \quad \Bigg( \frac{x'_{i,t}}
{\alpha_{i,t}} + \frac{x'_{i_1,t_1}}{\alpha_{i_1,t_1}} + \ldots + \frac{x'_{i_{r-1},t_{r-1}}}{\alpha_{i_{r-1},t_{r-1}}} >
 \frac{6}{\lambda} \bigg( \frac{\lvert Y_{i,t}^* \rvert}{p\alpha_{i,t}} +
 \frac{\lvert Y_{i_{h+1},t_{h+1}}^* \rvert}{p\alpha_{i_{h+1},t_{h+1}}} + \ldots + 
\frac{\lvert Y_{i_{r-1},t_{r-1}}^* \rvert}{p\alpha_{i_{r-1},t_{r-1}}} \bigg) \Bigg)
\end{equation*} 
\begin{equation*} 
(\alpha_{i_r,t_r}) \quad \times \quad \Bigg( \frac{x'_{i,t}}{\alpha_{i,t}} + 
\frac{x'_{i_1,t_1}}{\alpha_{i_1,t_1}} + \ldots + \frac{x'_{i_r,t_r}}{\alpha_{i_r,t_r}} > 
\frac{6}{\lambda} \bigg( \frac{\lvert Y_{i,t}^* \rvert}{p\alpha_{i,t}} + 
\frac{\lvert Y_{i_{h+1},t_{h+1}}^* \rvert}{p\alpha_{i_{h+1},t_{h+1}}} + \ldots +
 \frac{\lvert Y_{i_r,t_r}^* \rvert}{p\alpha_{i_r,t_r}} \bigg) \Bigg)
\end{equation*} 

Now we sum up all these inequalities (with the corresponding coefficients) to reach a contradiction.
Since all coefficients are $\geq 0$ and the very first one is positive ($\alpha_{i,t} - \alpha_{i_{h+1}, t_{h+1}} > 0$)
this ensures that we have non-zero sum.
Note that for each $1 \leq s \leq h$, term $\frac{x'_{i_s,t_s}}{\alpha_{i_s,t_s}}$ has
 appeared in the left hand side of all the above 
inequalities and so its coefficient in the sum would be the sum of all the coefficients:
\begin{equation*}
(\alpha_{i,t} - \alpha_{i_{h+1}, t_{h+1}}) + (\alpha_{i_{h+1},t_{h+1}} - \alpha_{i_{h+2}, t_{h+2}}) + \ldots + 
(\alpha_{i_{r-1},t_{r-1}} - \alpha_{i_r, t_r}) + (\alpha_{i_r,t_r}) = \alpha_{i,t}
\end{equation*}

This is the case for terms $\frac{x'_{i,t}}{\alpha_{i,t}}$ and $\frac{\lvert Y_{i,t}^* \rvert}{p\alpha_{i,t}}$ as well. 
Also for each $s$, $h < s \leq r$, terms $\frac{x'_{i_s,t_s}}{\alpha_{i_s,t_s}}$ and 
$\frac{\lvert Y_{i_s,t_s}^* \rvert}{p\alpha_{i_s,t_s}}$ have appeared in the left hand side and the right hand side 
of Inequality (\ref{XtoY_All_S_Contradiction}) associated with all values
$s, s+1, s+2, \ldots, r$, respectively. So the coefficient for
 $\frac{x'_{i_s,t_s}}{\alpha_{i_s,t_s}}$ and $\frac{\lvert Y_{i_s,t_s}^* \rvert}{p\alpha_{i_s,t_s}}$ in the sum would be:
\begin{equation*}
(\alpha_{i_s,t_s} - \alpha_{i_{s+1}, t_{s+1}}) + (\alpha_{i_{h+1},t_{h+1}} - \alpha_{i_{h+2}, t_{h+2}}) + \ldots + 
(\alpha_{i_{r-1},t_{r-1}} - \alpha_{i_r, t_r}) + (\alpha_{i_r,t_r}) = \alpha_{i_s,t_s}
\end{equation*}

This means that the sum of all the inequalities written above can be simplified to:
\begin{equation*}
\alpha_{i,t} \big( \frac{x'_{i,t}}{\alpha_{i,t}} + \sum_{s = 1}^h \frac{x'_{i_s, t_s}}{\alpha_{i_s, t_s}} \big) +
 \sum_{s=h+1}^r \alpha_{i_s, t_s} \cdot \frac{x'_{i_s, t_s}}{\alpha_{i_s, t_s}} > 
\frac{6}{\lambda} \bigg( \alpha_{i,t} \frac{\lvert Y_{i,t}^* \rvert}{p\alpha_{i,t}} +
 \sum_{s=h+1}^r \alpha_{i_s, t_s} \frac{\lvert Y_{i_s,t_s}^* \rvert}{p\alpha_{i_s,t_s}} \bigg)
\end{equation*} 
\begin{equation*} 
\Longrightarrow\alpha_{i,t} \big( \frac{x'_{i,t}}{\alpha_{i,t}} + \sum_{s = 1}^h \frac{x'_{i_s, t_s}}{\alpha_{i_s, t_s}} \big) +
 \sum_{s=h+1}^r x'_{i_s, t_s} > \frac{6}{\lambda} \bigg( \frac{\lvert Y_{i,t}^* \rvert}{p} + 
\sum_{s=h+1}^r \frac{\lvert Y_{i_s,t_s}^* \rvert}{p} \bigg)
\end{equation*} 

Considering that $\alpha_{i_1,t_1} \geq \alpha_{i_2, t_2} \geq \ldots \geq \alpha_{i_h, t_h} \geq \alpha_{i,t}$, and 
Equality (\ref{eq:Y_it_partition}) we have:
\begin{equation*}
x'_{i,t} + \sum_{s=1}^r x'_{i_s, t_s} \geq \alpha_{i,t} \big( \frac{x'_{i,t}}{\alpha_{i,t}} + 
\sum_{s = 1}^h \frac{x'_{i_s, t_s}}{\alpha_{i_s, t_s}} \big) + \sum_{s=h+1}^r x'_{i_s, t_s} > 
\frac{6}{\lambda} \cdot \frac{\lvert Y_{i, t}^* \rvert + \sum_{u = h+1}^r \lvert Y_{i_u, t_u}^* \rvert}{p} = 
\frac{6}{\lambda} \cdot \frac{\lvert Y_{i, t} \rvert}{p}
\end{equation*} 
\begin{equation*} 
\Rightarrow \quad x'_{i,t} + \sum_{s=1}^r x'_{i_s, t_s} > \frac{6}{\lambda} \cdot \frac{\lvert Y_{i, t} \rvert}{p} 
\end{equation*} 

This contradicts Inequality (\ref{eq:XtoY}), which was based on Lemma \ref{lmm:XtoY} for interval $a_{i,t}$. This contradiction
show that for at least one value of $s$, Inequality (\ref{XtoY_All_S}) holds, 
 which completes the proof of Lemma \ref{lmm:Combine_Subset_All}.

\end{proof}

%%%%%%%%%%%%%%%%%%%%%%%%%%%%%%%%%%%%%%%%%%%%%%%%%%%%%%%%%%%%%%
\section{ Proof of Theorem \ref{theo:base}}\label{sec:theobase}
In this section we prove Theorem \ref{theo:base}. We start by presenting a $(1-\varepsilon)$-approximation 
algorithm for the case of $m=1$
that runs in time $\rm{Poly}(n,p_{max})$ where $p_{max}$ is the largest processing time, and then show how to extend it to
a PTAS. 
We assume that $r_j$'s comes from a set of size $R$, $d_j$'s from a set of size $D$ where $R,D\in O(1)$. 
Also, we are given a vector $\vec{v}$ with $|\vec{v}|=B\in O(1)$ where
each $\vec{v}_i$ is a pair $(\vec{v}_i(s),\vec{v}_i(f))$ that specifies the start and end of a blocked interval over time in which
the machine cannot be used. 
%We can divide the timeline into $\sigma+1$ intervales (between the blocked times) where jobs may be scheduled, let us call these
% intervals $W_1,\dots, W_{\sigma+1}$. 

Our approach will be to find windows in the time-line where jobs can feasibly be scheduled in any order; these will be windows
that do not contain any release time or deadline nor any blocked space. Each of these windows 
will be contained entirely between a pair of release times or deadlines or blocks defined by $\vec{v}$, 
so we can schedule jobs in a window in any order.
We call the pair of release time and deadline of a job its \emph{type}

\begin{definition}[Types]
We say a job $j\in J$ is of type $t=(u,v)$ if $u$ is the release time of job $j$, $r_j$, and if $v$ is the deadline 
of job $j$, $d_j$. We let $\cT$ denote the set of all job types. 
%And for type $t$, we let $\cJ_t$ denote the set of jobs belonging to type $t$.
\end{definition}

Since we assume $R,D\in O(1)$, therefore $|\cT|\leq RD\in O(1)$. With these 
classifications, before scheduling individual jobs, we first guess how much processing time each job type $t$ has in an 
optimal solution and use this guess as a budget for job processing times and maximize the number of jobs of type $t$ scheduled 
given this budget. The number of such guesses will be at most $O((np_{max})^{|\cT|})\in O((np_{max})^{RD})$.

If a release time $r_j$ is within a blocked interval $(\vec{v}_i(s),\vec{v}_i(f))$ we change $r_j$ to $\vec{v}_i(f)$.
Similarly if a deadline $d_j$ is within a blocked interval $(\vec{v}_i(s),\vec{v}_i(f))$ we change $d_j$ to $\vec{v}_i(s)$.
We call the union of these release times and deadlines \emph{straddle points}, which we denote by $\cS$. 
Note that $|\cS|\leq R+D\in O(1)$.
We say a job $j$ in a schedule straddles a straddle point if it starts before the straddle point and finishes after the straddle
point (hence at the time of the straddle point the machine is busy with job $j$).

Let $\cS'$ be the union of $\vec{v}_i(s)$'s and $\vec{v}_i(f)$'s (i.e. start and end points of the blocked windows 
defined by $\vec{v}$).
For each point $\vec{v}_i(s)\in \cS'$ we assume there is a dummy job of size $\vec{v}_i(f)-\vec{v}_i(s)$ that is
being run exactly at start point $\vec{v}_i(s)$ until point $\vec{v}_i(f)$ and its position is fixed.
We enumerate the points in $\cS''=\cS\cup\cS'$ so that $s_i\in\cS''$ is the $i^{th}$ point in increasing order.

%Here is the sketch of a simpler algorithm that would produce a $(1-\varepsilon)$-approximate solution in pseudo-polynomial time.
If the number of jobs in an optimum solution is smaller than $\cS''/\varepsilon=O((R+D+B)/\varepsilon)$ then we guess all these $O(1)$ 
jobs and a permutation/schedule for them in optimum and this can be done in time $n^{\cS''/\varepsilon}(\cS/\varepsilon)!$. 
So let's assume otherwise. If we remove all the
jobs in optimum that straddle a straddle point (i.e. span a release time or deadline), we incur a loss of at most $|\cS''|$ and
we are left with a solution of value at least $(1-\varepsilon)\opt$. So there is a near optimum solution with no straddle job.
Let us call such a near optimum solution $\Op$. Our goal is to find such a solution.

We define \emph{windows}, which will denote the intervals where we schedule non-straddle jobs. 
The free interval between two consecutive points in $\cS''$ define a window, i.e. the free intervals between consecutive
straddle points or between a dummy job and a straddle point. 
Let these windows be $\cW$. Note that there are at most $R+D+B$ many windows.
Before describing the algorithm, we will take the near optimal schedule $\Op$ with no straddle jobs, 
and reschedule its jobs to nicely adhere to the definitions of straddle jobs and dummy jobs and allotments (total processing
time allocated for each job type). We will also note that any feasible schedule can be left-shifted, 
meaning that the start time of any job is its release time or the end time of another job, or the start time of the interval
right after a dummy job.
This will then define \emph{canonical} schedules that we can enumerate over in our algorithm.
We will look at the schedule $\Op$ and shift-left the jobs until either:
(1) they hit their release time, or (2) hit the finish time of another job (dummy or not), or (3) hit another release time/deadline
point. Let $\vec{a}^*$ be the allotment of
jobs in each window.
Lastly, we have the following observation that will be important for finding optimal canonical schedules.
\begin{observation}\label{obs:indep_r_and_d}
Given the allotments $\vec{a}^*$, the problem of scheduling jobs of type $t$ is independent of every other job type.
\end{observation}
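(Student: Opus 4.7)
The plan is to exploit two features of the canonical schedule $\Op$ together with the per-window structure of $\vec{a}^*$: first, $\Op$ contains no straddle jobs, so each scheduled job lies entirely inside a single window $w \in \cW$; second, the interior of every window $w$ contains no release time, no deadline, and no blocked interval, so inside $w$ any set of jobs whose total processing time fits can be scheduled back-to-back in any order. These two facts mean that the only way jobs of different types can interact is through the \emph{total amount of time} they consume in each window.

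First I would make precise what ``allotment'' means: $\vec{a}^*$ records, for every window $w \in \cW$ and every type $t \in \cT$, the total processing time $a^*_{t,w}$ that type-$t$ jobs occupy in $w$ in $\Op$. Since jobs of type $t = (u,v)$ have $r_j = u$ and $d_j = v$, and the endpoints of every window in $\cW$ lie at straddle points (release times, deadlines, or block endpoints), a type-$t$ job can legitimately be placed only in windows $w$ with $w \subseteq [u,v]$; hence $a^*_{t,w} = 0$ whenever $w \not\subseteq [u,v]$. Summing over $t$ gives $\sum_{t} a^*_{t,w} \le |w|$ in every window.

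Next I would argue that, once $\vec{a}^*$ is fixed, the scheduling subproblem for each type $t$ reads purely in terms of the type-$t$ job list and the numbers $\{a^*_{t,w}\}_{w \subseteq [u,v]}$: choose a subset of type-$t$ jobs and assign each to a window $w \subseteq [u,v]$ so that the total processing time assigned to $w$ is at most $a^*_{t,w}$, maximizing the count. No variable or constraint of this subproblem references any other type. Conversely, given an optimal solution to each type's subproblem, a feasible combined schedule is recovered by concatenating, within every window, the selected jobs of each type in left-shifted order; feasibility holds because $\sum_{t} a^*_{t,w} \le |w|$ and because every selected job is inside its $[r_j,d_j]$ interval by the type constraint $w \subseteq [u,v]$.

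The main (and only) subtlety I anticipate is ensuring that left-shifting the chosen jobs inside a window never causes a job to cross a window boundary or overlap a dummy/blocked job; this is taken care of by the definition of $\cW$ as the free intervals between consecutive points of $\cS''$, which already excludes blocked regions and is bounded on both sides by straddle points. Everything else is essentially bookkeeping: the problem decouples across $t \in \cT$ because allotments are per type per window and windows are internally unconstrained.
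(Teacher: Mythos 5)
Your proposal is correct and takes essentially the same approach as the paper: both argue that once allotments are fixed per type per window, and windows have no internal structure constraining job order, the only interaction between types is the total time used in a window, so the problem decouples into a multiple-knapsack instance per type. Your write-up simply spells out the details (the ``only if $w \subseteq [u,v]$'' constraint, the $\sum_t a^*_{t,w} \le |w|$ budget, and the left-shift reconstruction) that the paper states more tersely.
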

This last observation is important as it allows our algorithm to deal with each job type independently. This is clearly true 
since each job type has a specified allotment that jobs of that type can be scheduled in, and the allotments of two job types
 do not overlap. Given $\ell$ windows and allotments $\vec{a}_{i,t}$ for each type $t$ and window $i$ 
we have to see what is the maximum number of 
jobs of type $t$ that we can pack into these $\ell$ windows given the allotments for them in each window. This is a multiple
knapsack problem.

%%%%%%%%%%%%%%%%%%%%%%%%%%%%%%%%%%%%%%%%%%%%%%%%%%%%%%%%%%%%%%%
\subsection{Algorithm}
The algorithm here is a sweep across all canonical schedules by iterating through the windows and allotments, combined
 with a Multiple Knapsack dynamic program to schedule jobs of each type in their corresponding allotments. 
For each window $W_i\in\cW$ we guess an optimal 
choice of allotments in $W_i$, denoted $\vec{a}_i$, where $a_{i,\sigma} \in [0, np_{\max}]$ is the allotment in the $i$th 
window for jobs of type $\sigma$. We check that this choice of allotments corresponds to a canonical schedule in $W_i$ by 
checking if the allotments can be scheduled feasibly as if they were jobs (as explained below). 
More specifically,
we let window $W_1$ begin from the first straddle point $s_1$ and check that the point 
$s_1 + \sum_{\sigma=1}^{|\cT|} a_{1,\sigma}$ is at
 most I) the next straddle point or II) start of a dummy job (whichever comes first), 
if not then the check fails as the allotments are too large to fit in the window. 
We then repeat this process from start of window $W_2$ and so on.
 We also check that for any $a_{i,\sigma}\neq 0$, that the release time of type $\sigma$ is before start of window $i$,
 and the deadline of type $\sigma$ is at least end of this window, this ensures that when the jobs are scheduled in their allotments 
they are scheduled feasibly. We repeat this procedure for each window to get a choice of
allotments $\vec{a} = \{\vec{a}_i\}_{i\in [\ell]}$. If the checks succeed for each window then the allotments
can correspond to a canonical schedule.

Note that for any fixed job type, the size of an allotment
 for that type in a given window is in $[0,np_{\max}]$, so there are $O(np_{\max})$ many guesses for each job type in this window.
There are at most $(R+D+B)$ many windows and and $RD$ job types, 
so there are at most $(np_{\max})^{(R+D+B)^3}$ many allotment choices.
With a choice of allotments that correspond to a canonical schedule, 
we apply Observation \ref{obs:indep_r_and_d} to reduce the
 problem to solving an instance of the \textsc{Multiple Knapsack} problem for each job type. For the problem corresponding to 
jobs of type $\sigma$, say there is a knapsack $m_i$ corresponding to every window $i$, of size $a_{i,\sigma}$, and for each 
job $j$ of type $\sigma$ there is a corresponding item, $x_j$ in the \textsc{Multiple Knapsack} problem, with weight equal
 to $p_j$ and profit of $1$. Using a standard DP for the \textsc{Multiple Knapsack} problem with
 $R+D+B$ many knapsacks, we can solve this problem in time $O((np_{max})^{(R+B+D)^3})$.
This establishes the following lemma.

\begin{lemma}
This algorithm gives an $(1-\varepsilon)$-approximation solution to \tm with a constant number of release 
times and deadlines and blocked intervals and runs in time $(np_{max})^{(R+B+D)^3}+n^{R+D/\varepsilon}(R+D/\varepsilon)!$.
\end{lemma}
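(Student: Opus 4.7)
The plan is to split the correctness argument into two cases depending on the size of the optimum, matching the two summands of the claimed running time. When $\opt \le |\cS''|/\varepsilon$, brute-force enumeration of the at most $|\cS''|/\varepsilon = O((R+D+B)/\varepsilon)$ jobs together with their relative ordering solves the problem exactly in time $n^{|\cS''|/\varepsilon}(|\cS''|/\varepsilon)!$, giving the second summand. From here on I would focus on instances with $\opt > |\cS''|/\varepsilon$.

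In the large-optimum regime I would start from any optimum schedule and delete every job that spans some point of $\cS''$. Because the single machine can host at most one job at any instant, each straddle point is covered by at most one job, so the loss is bounded by $|\cS''| \le \varepsilon \cdot \opt$. Let $\Op$ be the resulting schedule, of value at least $(1-\varepsilon)\opt$, whose jobs all lie entirely inside individual windows of $\cW$. After left-shifting $\Op$, one obtains a canonical schedule whose induced allotment vector $\vec{a}^{*}$, defined by letting $a^{*}_{i,\sigma}$ be the total processing time that jobs of type $\sigma$ occupy inside window $W_{i}$, passes the feasibility check in the algorithm: the partial sums never cross the next straddle point or dummy job, and a type contributes a positive allotment to a window only when that window lies inside its release/deadline interval. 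Since the algorithm iterates over every feasible allotment, it considers $\vec{a}^{*}$ in particular.

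For this allotment I would invoke Observation \ref{obs:indep_r_and_d}: with $\vec{a}^{*}$ fixed, the problem of packing as many jobs of type $\sigma$ as possible into the allotments $\{a^{*}_{i,\sigma}\}_{i}$ decouples from every other type and is an instance of \textsc{Multiple Knapsack} with one bin of capacity $a^{*}_{i,\sigma}$ per window, and a unit-profit item of weight $p_j$ per job $j$ of type $\sigma$. The standard pseudopolynomial DP solves this optimally in time $(np_{\max})^{O(R+D+B)}$ per type, hence the algorithm schedules at least as many jobs of type $\sigma$ as $\Op$ does. Summing over the at most $RD$ types yields a feasible schedule of value at least $|\Op| \ge (1-\varepsilon)\opt$.

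For the running time, each of the $R+D+B$ windows has at most $(np_{\max})^{RD}$ allotment guesses, giving $(np_{\max})^{RD(R+D+B)} \le (np_{\max})^{(R+D+B)^{3}}$ total guesses, each followed by a \textsc{Multiple Knapsack} DP of the same order, producing the first summand of the claimed bound. The main delicate step is the compatibility between canonical left-shifted schedules and enumerated allotments: I need the feasibility check of the algorithm to accept exactly the allotment vector derived from $\Op$, and I need the left-shift to preserve feasibility despite the presence of dummy jobs and straddle points. Once that correspondence is pinned down cleanly, combining it with the knapsack subroutine and the small-optimum case yields the lemma.
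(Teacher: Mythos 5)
Your proposal follows essentially the same route as the paper: split on whether $\opt$ is small (brute-force enumeration) or large (drop straddle jobs at $\varepsilon$-loss, left-shift to a canonical schedule, enumerate per-window per-type allotments, and solve a \textsc{Multiple Knapsack} instance per type via Observation~\ref{obs:indep_r_and_d}), with the same running-time accounting. The paper itself does not spell out a separate proof of this lemma beyond the algorithm description preceding it, and your reconstruction matches that description.
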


%%%%%%%%%%%%%%%%%%%%%%%%%%%%%%%%%%%%%%%%%%%%%%%%%%%%%%%%%%%%
\subsection{A PTAS}
If job sizes are not assumed to be bounded by a polynomial in $n$ then the run-time of our algorithm has two problems. The first,
 is that we make $O(np_{\max})$ many guesses for each allotment. 
Second, we exactly solve the \textsc{Multiple Knapsack} problem using an algorithm with run-time that is polynomial with respect
 to both $n$ and $p_{\max}$. To deal with the second problem, we use a PTAS for the \textsc{Multiple Knapsack} problem
to find a schedule (e.g. \cite{CK05,Jansen12}).
To deal with the first problem we will use the following lemma, which states that given a $(1-\varepsilon)$-optimal 
canonical schedule $\Op$ with no straddle jobs, 
for each allotment $a_{i,t}$, if the allotment has at least $\lceil 1/\varepsilon^2 \rceil$ jobs then we can reduce the size of 
the allotment to the nearest power of $(1 + \varepsilon)$ and drop jobs in order from largest to smallest until the remaining 
jobs can be scheduled entirely in this reduced allotment, at a loss of factor at most $1-2\varepsilon$.

\begin{lemma} \label{lemma:ptas}
Given a canonical schedule $\Op$, if we apply the above rounding procedure then the throughput of this new schedule is a 
$(1 - 2\varepsilon)$-approximation of the throughput of $\Op$.
\end{lemma}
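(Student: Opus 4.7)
The plan is to bound the throughput loss per affected allotment by $2\varepsilon k$, where $k$ is the number of jobs in that allotment, and then sum over all allotments. Allotments with fewer than $\lceil 1/\varepsilon^2\rceil$ jobs are untouched by the procedure and contribute no loss, so the argument focuses on a single allotment $a_{i,t}$ containing $k\geq \lceil 1/\varepsilon^2\rceil$ jobs of sizes $p_1\geq p_2\geq\cdots\geq p_k$ with total $A=\sum_{j=1}^k p_j = a_{i,t}$. Since rounding sends $a_{i,t}$ to the nearest power of $(1+\varepsilon)$ below it, we have $a'_{i,t}\geq a_{i,t}/(1+\varepsilon)$, and hence $A - a'_{i,t}\leq \varepsilon A/(1+\varepsilon) \leq \varepsilon A$.

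Suppose the procedure drops the $m$ largest jobs, removing total processing time $S_m = p_1 + \cdots + p_m$. The proof would combine three inequalities: (a) $S_m \geq mA/k$, since the top $m$ of a sorted nonnegative sequence account for at least an $m/k$ fraction of the total; (b) $p_m \leq A/m$, because the largest $m$ values are each at least $p_m$ yet have sum at most $A$; and (c) $S_m < A - a'_{i,t} + p_m$, since the stopping rule is ``remove until the remaining total fits in $a'_{i,t}$'', so just before removing $p_m$ we had $S_{m-1} < A - a'_{i,t}$. Substituting (b) and (c) into (a) and dividing through by $A$ yields the clean bound $m/k < \varepsilon + 1/m$.

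A short case split then gives $m \leq 2\varepsilon k$. If $m \geq 1/\varepsilon$ then $1/m \leq \varepsilon$ and so $m/k < 2\varepsilon$. Otherwise $m < 1/\varepsilon \leq \varepsilon k$, using the hypothesis $k\geq 1/\varepsilon^2$, so again $m \leq \varepsilon k$. Hence every rounded allotment loses at most a $2\varepsilon$ fraction of its jobs. Summing over all rounded allotments, the total number of dropped jobs is at most $2\varepsilon$ times the number of jobs contained in those allotments, which is in turn at most $2\varepsilon|\Op|$. Thus the rounded schedule retains throughput at least $(1-2\varepsilon)|\Op|$, matching the claim.

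The main obstacle is the ``one extra job'' issue embedded in inequality (c): because the deletion stops only when the shortfall is crossed, the last removed job $p_m$ can be almost as large as the shortfall itself, so a naive averaging argument (without the $1/m$ slack) would lose a constant factor here. The case split above absorbs this contribution, and it is precisely where the hypothesis $k\geq \lceil 1/\varepsilon^2\rceil$ gets used: to rule out the regime of small $m$ in which a single removed job could otherwise dominate the loss in a thinly populated allotment.
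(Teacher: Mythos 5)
Your proof is correct, and it is \emph{more} careful than the paper's. The paper's own argument for this lemma is very terse: it simply asserts that ``the worst case for this fraction is when the jobs in this allotment is exactly $\alpha$ many jobs'' and that rounding then leaves at least $\lfloor \alpha/(1+\varepsilon)\rfloor$ jobs, which tacitly treats the jobs as equal-sized and does not justify why the extremal case has exactly $\alpha$ jobs. You instead prove the per-allotment bound directly for arbitrary job sizes: the three inequalities $S_m\geq mA/k$, $p_m\leq A/m$, and $S_m < (A - a') + p_m$ give $m/k < \varepsilon + 1/m$, and the case split on whether $m\geq 1/\varepsilon$ (using $k\geq 1/\varepsilon^2$ in the other branch) yields $m\leq 2\varepsilon k$. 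This is the missing justification: it shows exactly why the uniform-size, $k=\alpha$ instance is extremal, since for $m<1/\varepsilon$ the loss $m/k$ shrinks as $k$ grows, and for $m\geq 1/\varepsilon$ the loss is already at most $2\varepsilon$ regardless of $k$. Both routes end at the same $1-2\varepsilon$ factor per allotment and then sum over allotments, so the conclusion matches; what your version buys is a self-contained, inequality-based argument that handles arbitrary size distributions and isolates precisely where the $k\geq\lceil 1/\varepsilon^2\rceil$ hypothesis is needed.
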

\begin{proof}
Take a canonical schedule $\Op$. For a fixed window, if an allotment has at least $\alpha = \lceil 1/\varepsilon^2 \rceil$
 jobs then we round down the size of the allotment to the nearest power of $(1 + \varepsilon)$. We drop jobs in order of 
largest to smallest until the remaining jobs fit in the allotment.

We want to show that the fraction of jobs remaining after this rounding is at least $\frac{1}{1+\varepsilon}$. The worst case
 for this fraction is when the jobs in this allotment is exactly $\alpha$ many jobs. Rounding the allotment size down to the 
nearest $(1+\varepsilon)$ power means that there will be at least $\lfloor\frac{\alpha}{1+\varepsilon}\rfloor$ jobs. If we let 
$\alpha=\frac{1}{\varepsilon^2}$, the fraction of jobs remaining will be at least $(1-2\varepsilon)$.
\end{proof}

So the number of guesses we have to make for allotment of each job type in each window will reduce from $O(np_{max})$ to
$O(\log(np_{max}))$. The algorithm we use will be similar to the pseudo-polynomial time algorithm. 
We will sweep across the windows as before, checking that they 
correspond to canonical schedules. To sweep across allotments, we will guess from both allotment sizes that are powers of 
$(1+\varepsilon)$ and that are equal to combinations of up to $\lceil 1/\varepsilon^2 \rceil$ many job sizes. 
This reduces the number of guesses from $(np_{max})^{(R+D+B)^3}$ to $(\log(np_{max}))^{(R+D+B)^3}$. 
The reduction to the \textsc{Multiple Knapsack} problem is the same but instead of the pseudo-polynomial time 
solution, we use the PTAS due to \cite{Jansen12} which runs in time $2^{\eps^{-1}\log^{-4}(1/\eps)}+\rm{Poly}(n)$. 
The proof of the following is immediate.

\begin{lemma}
This algorithm runs in polynomial time.
\end{lemma}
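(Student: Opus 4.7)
The plan is to bound the overall running time as a product of three factors: the number of allotment enumerations, the cost of checking whether an enumeration is canonical, and the cost of solving a Multiple Knapsack instance per job type. Since the algorithm is obtained from the pseudo-polynomial version by (i) restricting the candidate allotment sizes using Lemma~\ref{lemma:ptas} and (ii) replacing the exact Multiple Knapsack DP with the PTAS of \cite{Jansen12}, I only need to re-bound the costs that previously depended polynomially on $p_{\max}$.

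First I would count the candidate values for a single allotment $a_{i,\sigma}$. By Lemma~\ref{lemma:ptas}, it suffices to enumerate $a_{i,\sigma}$ either as a power of $(1+\varepsilon)$ in $[0,np_{\max}]$, of which there are $O(\varepsilon^{-1}\log(np_{\max}))$, or as a sum of at most $\lceil 1/\varepsilon^{2}\rceil$ job processing times, of which there are $n^{O(1/\varepsilon^{2})}$. Both quantities are polynomial in the input length for fixed $\varepsilon$, because $\log p_{\max}$ is part of the input encoding. Multiplying over the $O(R+D+B)$ windows and the $|\cT|\le RD$ job types, all $O(1)$, the total number of allotment tuples enumerated is $\bigl(n^{O(1/\varepsilon^{2})}+O(\varepsilon^{-1}\log(np_{\max}))\bigr)^{O((R+D+B)RD)}$, still polynomial in $n$ and $\log p_{\max}$.

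Next I would bound the per-tuple work. Checking whether a tuple $\vec{a}$ corresponds to a canonical schedule requires only a single sweep through the $O(R+D+B)$ windows, hence $O(1)$ time. For each tuple that survives the check, I would invoke the Jansen PTAS once per job type on an instance with $O(R+D+B)$ knapsacks and at most $n$ items; each call takes $2^{\eps^{-1}\log^{-4}(1/\eps)}+\mathrm{Poly}(n)$ time, which is polynomial in $n$ for fixed $\varepsilon$. The small-$\opt$ branch, where the schedule contains fewer than $|\cS|/\varepsilon=O(1)$ jobs, is handled by brute enumeration in $n^{O(1/\varepsilon)}(1/\varepsilon)!$ time, again polynomial.

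Summing the contributions yields a running time polynomial in $n$ and the input length. The only non-routine point to verify is that the restricted enumeration in step one really does contain an allotment tuple corresponding to some $(1-O(\varepsilon))$-approximate canonical schedule; this is precisely the content of Lemma~\ref{lemma:ptas}, which has already been established. Hence the algorithm is a polynomial-time $(1-O(\varepsilon))$-approximation, and no serious obstacle remains — the argument is essentially a bookkeeping exercise over the ingredients assembled above.
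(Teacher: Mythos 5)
Your proof is correct and follows essentially the same route the paper takes: count the restricted allotment enumeration (powers of $(1+\varepsilon)$ plus sums of at most $\lceil 1/\varepsilon^2\rceil$ job sizes), note the canonical-schedule check is cheap, and invoke the Jansen PTAS for Multiple Knapsack per surviving tuple. The paper simply asserts the bound is immediate; your write-up fills in the bookkeeping (and, if anything, is a bit more careful than the paper's stated count $(\log(np_{\max}))^{(R+D+B)^3}$, which omits the $n^{O(1/\varepsilon^2)}$ contribution from job-size combinations — harmless since that is polynomial for fixed $\varepsilon$).
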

%\begin{proof}
%By restricting the number of guesses for allotments to powers of $(1+\varepsilon)$, we bring the number of guesses for allotments
% from polynomial in the $p_{\max}$ to being polynomial in the size of the input. 
%Our reduction to the \textsc{Multiple Knapsack} problem maps jobs to items, and allotments to knapsacks. We 
%use the PTAS due to \cite{CK05} to find a $(1-O(\varepsilon)$-approximate solution to the \textsc{Multiple Knapsack} problem, 
%which has runtime that is polynomial in the number of items, $n$, and the number of knapsacks, a constant. This solution to the
% \textsc{Multiple Knapsack} problem is then transformed back to a schedule for the original problem by mapping each item in 
%a knapsack to the corresponding job and allotment.
%\end{proof}

\begin{theorem}
This algorithm is a PTAS for the \tm problem with a constant number of release times and deadlines and blocked intervals.
\end{theorem}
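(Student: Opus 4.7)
The plan is to assemble the three preceding lemmas into a single multiplicative approximation guarantee and then bound the running time by a polynomial in $n$ for fixed $\varepsilon$. First, I would start from an arbitrary optimal schedule $\OPT$ and reduce to a structured near-optimum. If $\opt < |\mathcal{S}''|/\varepsilon$, then $\opt=O((R+D+B)/\varepsilon)=O(1/\varepsilon)$, and we can find the optimum exactly in time $n^{(R+D+B)/\varepsilon}((R+D+B)/\varepsilon)!$ by brute-force enumeration over which jobs are scheduled and in what order. Otherwise, discarding the at most $|\mathcal{S}''|$ straddle jobs from $\OPT$ loses a $(1-\varepsilon)$ factor and yields a schedule $\mathcal{O}$ of value at least $(1-\varepsilon)\opt$ with no straddle job. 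Left-shifting $\mathcal{O}$ makes it canonical in the sense defined before Observation~\ref{obs:indep_r_and_d}, with some allotment vector $\vec{a}^{\,*}$ over the $O(R+D+B)$ windows and $O(RD)$ job types.

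Next I would invoke Lemma~\ref{lemma:ptas} to replace $\vec{a}^{\,*}$ by a rounded vector $\vec{\widetilde a}$ whose nonzero entries are either (i) powers of $(1+\varepsilon)$, or (ii) sums of at most $\lceil 1/\varepsilon^2\rceil$ job processing times, incurring only a further $(1-2\varepsilon)$ factor. Crucially, $\vec{\widetilde a}$ is exactly one of the candidate allotment vectors the algorithm enumerates, and it passes all the canonical-feasibility checks (straddle points, dummy-job positions, release/deadline compatibility) because it was produced from a canonical schedule. By Observation~\ref{obs:indep_r_and_d}, for this particular guess the scheduling problem decouples over the $|\mathcal{T}|$ job types into independent \textsc{Multiple Knapsack} instances (one per type, with one knapsack per window). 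Running the PTAS of \cite{Jansen12} with parameter $\varepsilon$ on each instance yields, for each type, a schedule packing at least a $(1-\varepsilon)$ fraction of the jobs of that type packed by $\mathcal{O}$ into $\vec{\widetilde a}$. Summing across types and composing the three losses, the algorithm's output has value at least $(1-\varepsilon)(1-2\varepsilon)(1-\varepsilon)\opt \geq (1-4\varepsilon)\opt$; rescaling $\varepsilon$ delivers the claimed $(1-\varepsilon)$-approximation.

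For the running time, I would simply multiply the already-bounded enumeration cost $(\log(np_{\max}))^{(R+D+B)^3}$ by the cost $2^{O(\varepsilon^{-1}\log^{-4}(1/\varepsilon))}+\mathrm{Poly}(n)$ of one \textsc{Multiple Knapsack} PTAS call, applied $|\mathcal{T}|=O(RD)$ times, and add the brute-force term $n^{(R+D+B)/\varepsilon}((R+D+B)/\varepsilon)!$ from the small-$\opt$ branch. Since $R,D,B$ are constants and $\log p_{\max}\le\log T$ with $T$ already subsumed in the outer DP that calls this routine, the overall running time is $2^{O(\varepsilon^{-1}\log^{-4}(1/\varepsilon))}+\mathrm{Poly}(n)$, matching the bound stated in Theorem~\ref{theo:base}.

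The main obstacle I expect is not any single inequality but the bookkeeping in the second paragraph: verifying that the rounded allotment $\vec{\widetilde a}$ produced by Lemma~\ref{lemma:ptas} actually lies in the set the algorithm enumerates (both the ``power-of-$(1+\varepsilon)$'' entries and the ``small combinatorial'' entries must be matched by the guessing scheme), and that the straddle-point / dummy-job feasibility checks are preserved under this rounding. Once that correspondence is made explicit, the rest is a clean composition of the three already-proved approximation lemmas.
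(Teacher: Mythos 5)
Your proposal is correct and follows essentially the same approach as the paper: discard straddle jobs for a $(1-\varepsilon)$ factor, invoke Lemma~\ref{lemma:ptas} to round the canonical allotment vector into the enumeration set for a $(1-2\varepsilon)$ factor, apply the \textsc{Multiple Knapsack} PTAS per job type for a $(1-\varepsilon)$ factor, and multiply the losses together, with the small-$\opt$ brute-force branch and the running-time product handled just as in the paper. Your write-up is in fact a bit more careful than the paper's in flagging that the rounded $\vec{\widetilde a}$ must both lie in the enumerated set and still pass the window-feasibility checks (which it does, since rounding only shrinks allotments), though your phrase about the knapsack PTAS matching ``the jobs of that type packed by $\mathcal{O}$ into $\vec{\widetilde a}$'' is slightly off — $\mathcal{O}$ packed into $\vec{a}^*$, and the $(1-2\varepsilon)$ loss relates the optimal packings under the two allotment vectors before the PTAS factor is applied.
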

\begin{proof}
We know we restrict our choices of allotments to be either the case that the size of the allotment is some rounded value, or
 that are combinations of up $\lceil 1/\varepsilon^2 \rceil$ many jobs. As we have shown in Lemma \ref{lemma:ptas} this will
 give an allotment whose optimal packing is within $1-2\varepsilon$ of the optimal value for that job type and window.

Given this choice of allotments, a solution to \textsc{Multiple Knapsack} problems with constant many knapsacks
with unit weighted jobs of arbitrary size can be solved using a PTAS due to \cite{Jansen12}.
Therefore, we find a solution that is at least a $(1-\varepsilon)(1-2\varepsilon) (1-\varepsilon) = 1 - O(\varepsilon)$ factor 
of the optimal solution where one $1-\varepsilon$ factor is to assume there are no straddle jobs, on $1-\varepsilon$ factor is
due to use of a PTAS for the \textsc{Multiple Knapsack} problem, and the $1-2\varepsilon$ factor is due to the rounding up
the guessed sizes of allotments to powers of $1+\varepsilon$. Total time will be 
$(2^{\eps^{-1}\log^{-4}(1/\eps)}+{\rm Poly}(n))(\log(np_{max}))^{(R+D+B)^3}=O( 2^{\eps^{-1}\log^{-4}(1/\eps)}+{\rm Poly}(n))$.
\end{proof}

%%%%%%%%%%%%%%%%%%%%%%%%%%%%%%%%%%%%%%%%%%%%%%%%%%%%%%%%%%%%%%%
\subsection{Extending to a Constant Number of Machines}
In this subsection we describe how to extend the results of this section to a constant number of machines. We first describe
the extension of the pseudo-polynomial time algorithm. The intuition of this extension is simple,
 as before we assume there are no straddle jobs at a loss of $1-\varepsilon$ factor. Let $\Op$ be a 
 $(1-\varepsilon)$-approximate solution with no job straddling a straddle point. Windows are defined similarly.
We guess allotments for each job type, for each window and for each machine.
The number of windows increases by at most a factor of $m$ so the number of possible allotment guesses is bounded by
 $(np_{\max})^{m(R+D+B)^3}$. 
With multiple machines we can define \emph{canonical} schedules in a similar way as the single machine case.
The algorithm is a straightforward extension of the algorithm for single machine. We guess the allotments $\vec{a}$ for 
all the windows as in $\Op$. To check these choices correspond to a canonical schedule, we perform the check 
described earlier on a machine by machine basis. To find the schedule given these allotments, we perform the 
same reduction to the \textsc{Multiple Knapsack} problem. Since the 
number of knapsacks increases by a factor of at most $m=O(1)$, the algorithm still runs in time polynomial in $n$ and $p_{\max}$.

We also have that Lemma \ref{lemma:ptas} holds for this problem since it argues on a per allotment basis. So we can get a PTAS
 for this problem by guessing allotments that are either powers or
$(1+\varepsilon)$ or are equal to combinations of up to $\lceil 1/\varepsilon^2 \rceil$ many job sizes. We reduce to the
$\textsc{Multiple Knapsack}$ problem as before and again apply the PTAS due to \cite{Jansen12},
noting that since the number of allotments increase by a factor of at most $m$, the algorithm of \cite{Jansen12}
still runs in polynomial time.

\bibliographystyle{plainurl}
{
\bibliography{ref-throughput}
}

\end{document}